\definecolor{LightGray}{gray}{0.8}
\newcommand{\Nat}{\mathbb{N}}
\newcommand{\kw}[1]{\ensuremath{\mathsf{#1}}}
\newcommand{\cat}[1]{\ensuremath{\mathbf{#1}}}
\newcommand\bcmdtab{\noindent\bgroup\tabcolsep=0pt%
  \begin{tabular}{@{}p{10pc}@{}p{20pc}@{}}}
\newcommand\ecmdtab{\end{tabular}\egroup}
\definecolor{LightGray}{gray}{0.8}
\begin{document}

\title{Structural Resolution: a Framework for Coinductive Proof Search and Proof Construction in Horn Clause Logic}
\author{Ekaterina Komendantskaya
\affil{School of Mathematical and Computer Sciences, Heriot-Watt University, UK}
Patricia Johann
\affil{Department of Computer Science, Appalachian State University, USA}
Martin Schmidt
\affil{University of Osnabr\"uck,
  Osnabr\"uck, Germany}
}

\begin{abstract}
Logic programming (LP) is a programming language based on first-order
Horn clause logic that uses SLD-resolution as a semi-decision
procedure. Finite SLD-computations are inductively sound and complete
with respect to least Herbrand models of logic programs. Dually, the
corecursive approach to SLD-resolution views infinite SLD-computations
as successively approximating infinite terms contained in programs'
greatest complete Herbrand models. State-of-the-art algorithms
implementing corecursion in LP are based on loop detection. However,
such algorithms support inference of logical entailment only for
rational terms, and they do not account for the important
property of productivity in infinite SLD-computations.  Loop detection
thus lags behind coinductive methods in interactive theorem proving
(ITP) and term-rewriting systems (TRS).
 
Structural resolution is a newly proposed alternative to
SLD-resolution that makes it possible to define and semi-decide a
notion of productivity appropriate to LP. 
In this paper, we prove soundness of structural resolution relative to Herbrand
model semantics for productive inductive, coinductive, and mixed
inductive-coinductive logic programs.

We introduce two algorithms that support coinductive proof search for infinite productive 
terms. One algorithm combines the method of loop detection with productive structural resolution, thus guaranteeing productivity 
of coinductive proofs for infinite rational terms. The other allows to make lazy sound observations of fragments of infinite irrational 
productive terms. 
This puts coinductive methods in LP on par with
productivity-based observational approaches to coinduction in ITP and
TRS.

\end{abstract}

\category{F.3.2}{Semantics of Programming Languages}{Operational Semantics}
\category{F.4.1}{Mathematical Logic}{Logic and Constraint Programming}

\terms{}

\keywords{Logic programming, 
resolution, 
induction,
coinduction,
infinite derivations,  
Herbrand models.}

\acmformat{Ekaterina Komendantskaya, et al., 2015.
Structural Resolution}

\begin{bottomstuff}
The work was supported by EPSRC grants EP/J014222/1 and EP/K031864/1
and NSF award 1420175.

\end{bottomstuff}


\maketitle

\section{Introduction}\label{sec:intro}
\subsection{A Symmetry of Inductive and Coinductive Methods}

Logic Programming (LP) is a programming language based on Horn clause
logic.  If $P$ is a logic program and $t$ is a (first-order) term,
then LP provides a mechanism for automatically inferring whether or
not $P$ logically entails $t$.  The traditional (inductive) approach
to LP is based on least fixed point semantics~\cite{Kow74,Llo87} of
logic programs, and defines, for every such program $P$, the
\emph{least Herbrand model} for $P$, i.e., the  set of
all (finite) ground terms {\em inductively entailed} by $P$.

\begin{example}\label{ex:nat} 
The program $P_1$ defines the set of natural numbers:

\vspace*{0.1in}

\noindent
0. $\mathtt{nat(0)} \; \gets \;$\\ 
1. $\mathtt{nat(s(X))} \; \gets \; \mathtt{nat(X)}$

\vspace*{0.1in}

\noindent 
The least Herbrand model for $P_1$ comprises the terms
$\mathtt{nat(0), \ nat(s(0)),}$ $\mathtt{nat(s(s(0))), \ldots}$
\end{example}

The clauses of $P_1$ can be viewed as inference rules
$\frac{}{\mathtt{nat(0)}}$ and
$\frac{\mathtt{nat(X)}}{\mathtt{nat(s(X))}}$, and the least Herbrand
model can be seen as the set obtained by the forward closure of these
rules. Some approaches to LP and first-order sequent calculi are based
on this inductive view~\cite{ba08,BrotherstonS11} of programs, which
is entirely standard~\cite{Sg12}. A similar view underlies inductive
type definitions in interactive theorem proving (ITP)~\cite{Agda,Coq}. For example,
$P_1$ also corresponds to the following Coq definition of an inductive
type:

\vspace*{0.1in}

\noindent
$\mathtt{Inductive \ nat \ : \ Type :=}$\\
$\hspace*{0.5in}\mathtt{| \ \ 0 \ : \ nat}$\\
$\hspace*{0.5in}\mathtt{| \ \ S \ : \ nat \rightarrow nat.}$

\vspace*{0.1in}

In addition to viewing logic programs inductively, we can also view
them coinductively. The \emph{greatest complete Herbrand model} for a
program $P$ takes the backward closure of the rules derived from $P$'s
clauses, thereby producing the largest set of finite and infinite
ground terms \emph{coinductively entailed} by $P$. For example, the
greatest complete Herbrand model for $P_1$ is the set containing all
of the finite terms in the least Herbrand model for $P_1$, together
with the term $\mathtt{nat(s(s(...)))}$ representing the first limit
ordinal. The coinductive view of logic programs corresponds to
coinductive type definitions in ITP.

As it turns out, some logic programs have no natural inductive
semantics and should instead be interpreted coinductively:

\begin{example}
\label{ex:natstream} 
The program $P_2$ defining streams of natural numbers comprises the
clauses of $P_1$ and the following additional one:

\vspace*{0.1in}

\noindent 
2. $\mathtt{nats(scons(X,Y))} \; \gets \; \mathtt{nat(X)},\mathtt{nats(Y)}$ 

\vspace*{0.1in}

\noindent 
No terms defined by $\mathtt{nats}$ are contained in the least
Herbrand model for $P_2$, but $P_2$'s greatest complete Herbrand model
contains infinite terms representing infinite streams of natural numbers, like e.g. the infinite term $t = \mathtt{nats(scons(0,scons(0,
  \ldots)}$ representing the infinite stream of zeros.
\end{example}

\noindent
The program $P_2$ corresponds to the following Coq definition of a
coinductive type:

\vspace*{0.1in}

\noindent
$\mathtt{CoInductive \  nats \ : Type  :=}$\\
$\hspace*{0.5in}\mathtt{SCons : nat \rightarrow nats \rightarrow nats.}$

\vspace*{0.1in}

\noindent 
The formal relation between logic programs, Herbrand models, and types
is analysed in \cite{HeintzeJ92}.

\subsection{Preconditions for an Operational Semantics?}

The (least and greatest complete) Herbrand models for programs, as
defined by (forward and backward rule closure, respectively, of) their
clauses, provide one important way to understand logic programs. But
an equally important way is via their computational behaviours.
Rather than using Herbrand models to give meaning to ``inductive'' and
``coinductive'' logic programs, we can also use the operational
properties of SLD-resolution to assign programs semantics that take
into account the computational behaviours that deliver those
models. Ideally, we would like to do this in such a way that the
symmetry between the Herbrand model interpretations of inductive and
coinductive programs as the sets of terms (i.e., the types) they
define is preserved by these computational interpretations.

The transition from types to computations is natural in ITP, where
recursive functions consume inputs of inductive types and, dually,
corecursive functions produce outputs of coinductive types. Since
systems such as Coq and Agda require recursive functions to be
terminating in order to be sound, and since SLD-resolution similarly
requires a logic program's derivations to be terminating in order for
them to be sound with respect to that program's least Herbrand model,
we might dually expect a logic program's non-terminating derivations
to compute terms in its greatest complete Herbrand model.  However,
non-termination does not play a role for coinduction dual to that
played by termination for induction. In particular, the fact that a
logic program admits non-terminating SLD-derivations does not, on its
own, guarantee that the program's computations completely capture its
greatest complete Herbrand model:

\begin{example}\label{ex:bad}
The following ``bad'' program gives rise to an infinite SLD-derivation:

\vspace*{0.1in}

\noindent
0. $\mathtt{bad(f(X))} \; \gets \; \mathtt{bad(f(X))}$

\vspace*{0.1in}

\noindent 
Although this program does not compute any infinite terms, the
infinite term $\mathtt{bad(f(f(...)))}$ is in its greatest complete
Herbrand model.
\end{example}

It is important to note that the ``badness'' of this program is
unrelated to the fact that LP is untyped.  The following corecursive
function is equally ``bad'', and will be rejected by Coq:

\vspace*{0.1in}

\noindent
$\mathtt{CoInductive \ Stream \ A : Type :=}$\\ $\mathtt{\ \ \ \ \
SCons : A \rightarrow Stream \ A \rightarrow Stream \ A.}$\\[1ex]
$\mathtt{CoFixpoint\ bad \ (f: \ A \ \rightarrow \ A)\ (x:A) \ : \
Stream \ A := } \ $ $\mathtt{ bad \ f \ (f \ x). }$

\vspace*{0.1in}

The problem here actually lies in the fact that both the LP and the
ITP versions of the above ``bad'' program fail to satisfy the
important property of productivity. The productivity requirement on
corecursive programs for systems such as Coq and Agda reflects the
fact that an infinite computation can only be consistent with its
intended coinductive semantics if it is {\em globally productive},
i.e., if it actually produces an infinite object in the limit. But in
order to give an operational meaning to ``in the limit'' --- which is
not itself a computationally tractable concept --- productivity is
usually interpreted in terms of finite observability. Specifically, a
function can be (finitely) observed to be globally productive if each
part of its infinite output can be generated in finite time. We call
this kind of productivity {\em observational productivity}.  A similar
notion of an observationally productive infinite computation has also
been given for stream productivity in term rewriting systems (TRS)~\cite{EndrullisGHIK10,EndrullisHHP015}.
Moreover, a variety of syntactic guardedness checks have been
developed to semi-decide observational productivity in ITP in
practice~\cite{Coq94,Gimenez98,BK08}. However, prior to~\cite{KJS17}, LP did not
have any notion of an observationally productive program, and
therefore did not have a corresponding operational semantics based on
any such notion.

\subsection{Symmetry Broken}

It is well-known that termination captures the least Herbrand model
semantics of (inductive) logic programs computationally: the
terminating and successful SLD-derivations for any program $P$ give a
decision procedure for membership in the least Herbrand model for
$P$. For example, after a finite number of SLD-derivation steps we can
conclude that $\mathtt{nat(X)}$ is in the least Herbrand model for
program $P_1$ if $\mathtt{X} = \mathtt{0}$. Termination of
SLD-derivations thus serves as a computational precondition for
deciding logical entailment.

But for programs, like $P_2$, that admit non-terminating derivations,
SLD-resolution gives only a {\em semi}-decision procedure for logical
entailment. Indeed, if an SLD-derivation for a program and a query
terminates with success, then we definitely know that the program
logically entails the term being queried, and thus that this term is
in the greatest complete Herbrand model for the program. But if an
SLD-derivation for the program and query does not terminate, then we
can infer nothing. It is therefore natural to ask:

\vspace*{0.1in}\noindent {\em Question: Is it possible to capture the
  greatest complete Herbrand model semantics for potentially
  non-terminating logic programs computationally? If so, how?}

\vspace*{0.1in}\noindent That is, can we restore the symmetry between
terminating and potentially non-terminating logic programs so that
that the correspondence between a terminating program's Herbrand
semantics and its computational behaviour also holds for
non-terminating programs?

In one attempt to match the greatest complete Herbrand semantics for
potentially non-terminating programs, an operational counterpart ---
called \emph{computations at infinity} --- was introduced in the
1980s~\cite{Llo87,EmdenA85}. The operational semantics of a
potentially non-terminating logic program $P$ was then taken to be the
set of all infinite ground terms computable by $P$ at infinity. For
example, the infinite ground term $t$ in Example~\ref{ex:natstream} is
computable by $P_2$ at infinity starting with the query $? \gets
\mathtt{nats(X)}$. Although computations at infinity do better capture
the computational behaviour of non-terminating logic programs, they
are still only sound, and not complete, with respect to those
programs' greatest complete Herbrand models. For example, the infinite
term $\mathtt{bad(f(f(...)))}$ is in the greatest complete Herbrand
model for the ``bad'' program of Example~\ref{ex:bad}, as noted there,
but is not computable at infinity by that program.

Interestingly, computations at infinity capture the same intuition
about globally productive infinite SLD-derivations that underlies the
productivity requirement for corecursive functions in ITP~\cite{Coq94,Gimenez98,BK08} and productive streams in
TRS~\cite{EndrullisGHIK10,EndrullisHHP015}. That is, they insist that each infinite SLD-derivation actually
produces an (infinite) term.  This observation leads us to adapt the
terminology of~\cite{Llo87,EmdenA85} and say that a logic program $P$
is {\em SLD-productive} if every infinite SLD-derivation for $P$
computes an infinite term at infinity. SLD-productivity captures the
difference in computational behavior between programs, like $P_2$,
that actually do compute terms at infinity, from ``bad'' programs,
like that of Example~\ref{ex:bad}, that do not. While computations at
infinity are not complete with respect to greatest complete Herbrand
models for non-SLD-productive logic programs, for SLD-productive
programs they {\em are}. For example, the SLD-productive program below
is similar to our non-SLD-productive ``bad'' program and its greatest
complete Herbrand model is computed in the same way:

\vspace*{0.1in}
\noindent
0. $\mathtt{good(f(X))} \; \gets \; \mathtt{good(X)}$

\vspace*{0.1in}

\noindent
But because this program is SLD-productive --- and, therefore,
``good'' --- the infinite term $\mathtt{good(f(f(...)))}$
corresponding to the problematic term above is not only in its
greatest complete Herbrand model, but is also computable at infinity.

In light of the above, we concentrate on productive logic programs,
shifting our focus away from greatest complete Herbrand models and
toward computations at infinity, to give such programs a more
computationally relevant semantics. But a big challenge still remains:
even for productive programs, the notion of computations at infinity
does not by itself give rise to implementations. Specifically,
although SLD-productivity captures the important requirement that
infinite computations actually produce output, it does not give a
corresponding notion of finite observability, as ITP and TRS stream productivity approaches
productivity do. We therefore refine our question above to ask:

\vspace*{0.1in}

\noindent {\em Question (refined): Can we formulate a computational
  semantics for LP that redefines productivity in terms of finite
  observability, as is done elsewhere in the study of programming
  languages, and that does this in such a way that it both yields
  implementations and ensures soundness and completeness with respect
  to computations at infinity (rather than greatest complete Herbrand
  models)? If so, how?}

\vspace*{0.1in}

Thirty years after the initial investigations into coinductive
computations, coinductive logic programming, implemented as CoLP, was
introduced~\cite{GuptaBMSM07,SimonBMG07}.  CoLP provides practical
methods for terminating infinite SLD-derivations. 
CoLP's coinductive proof search is based on a
loop detection mechanism that requires the programmer to supply
annotations marking every predicate as either inductive or
coinductive. For coinductive predicates, CoLP observes finite
fragments of SLD-derivations, checks them for unifying subgoals, and
terminates when loops determined by such subgoals are found. A similar
loop detection method is employed for type class inference in the
Glasgow Haskell Compiler (GHC)~\cite{Lammel:2005}, and CoLP itself is
used for type class inference in Featherweight
Java~\cite{AnconaLagorio11}.

\vspace*{-0.1in}

\begin{example}\label{ex:natstream2} 
If $\mathtt{nats}$ is marked as coinductive in $P_2$, then the query
$? \gets \mathtt{nats(X)}$ gives rise to an SLD-derivation with a
sequence of subgoals $\mathtt{nats(X)} \leadsto^{X \mapsto scons(0,Y')} \mathtt{nats(Y')} \leadsto \ldots$.
Observing that
$\mathtt{nats(X)}$ and $\mathtt{nats(Y')}$ unify
and thus comprise a loop, CoLP concludes that $\mathtt{nats(X)}$ has
been proved 
and  returns the
answer $\mathtt{X=scons(0,X)}$ in the form of a ``circular'' term
indicating that $P_2$ logically entails the term $t$ in
Example~\ref{ex:natstream}.
\end{example}

CoLP is sound, but incomplete, relative to greatest complete Herbrand
models~\cite{GuptaBMSM07,SimonBMG07}. But, perhaps surprisingly, it is
{\em neither} sound {\em nor} complete relative to computations at
infinity. CoLP is not sound because our ``bad'' program from
Example~\ref{ex:bad} computes no infinite terms at infinity for the
query $? \gets \mathtt{bad(X)}$, whereas CoLP notices a loop and
reports success (assuming the predicate $\mathtt{bad}$ is marked as
coinductive). CoLP is not complete because not all terms computable at
infinity by all programs can be inferred by CoLP. In fact, CoLP's loop
detection mechanism can only terminate if the term computable at
infinity is a \emph{rational}
term~\cite{Courcelle83,JaffarS86}. Rational terms are terms that can
be represented as trees that have a finite number of distinct
subtrees, and can therefore be expressed in a closed finite form
computed by circular unification. The ``circular'' term $\mathtt{X}$ =
$\mathtt{scons(0,X)}$ in Example~\ref{ex:natstream2} is so expressed.
For irrational terms, CoLP simply does not terminate:

\begin{example}\label{ex:add}\label{ex:fs}
The program $P_3$ defines addition on the Peano numbers, together with
the stream of Fibonacci numbers:

\vspace*{0.05in}\noindent
0. $\mathtt{add(0,Y,Y)} \; \gets \;$ \\
1. $\mathtt{add(s(X),Y,s(Z)) } \; \gets \; \mathtt{add(X,Y,Z)}$\\
2. $\mathtt{fibs(X,Y,cons(X,S)) } \; \gets \; \mathtt{add(X,Y,Z),
  fibs(Y,Z,S)}$\\

\vspace*{-0.15in}

\vspace*{0.05in}\noindent From a coinductive perspective, $P_3$ is
semantically and computationally meaningful. It computes the infinite
term $t^* = \mathtt{fibs(0,s(0),cons(0,cons(s(0),cons(s(0),
  cons(s(s(0)), \ldots))))}$, and thus the stream of Fibonacci numbers
(in the third argument to $\mathtt{fibs}$). The term $t^*$ is both
computable at infinity by $P_3$ and contained in $P_3$'s greatest
complete Herbrand model.  Nevertheless, when CoLP processes the
sequence $\mathtt{fibs(0,s(0),cons(0,S))}, \,
\mathtt{fibs(s(0),s(0),}$ $\mathtt{cons(s(0),S'))}$,
$\mathtt{fibs(s(0), s(s(0),cons(s(0),S''))},\, \ldots$ of subgoals for
the program $P_3$ and query $? \gets \mathtt{fibs(0,s(0),X)}$ giving
rise to $t^*$, it cannot unify any two of them, and thus does not
terminate.
\end{example}

The upshot is that CoLP cannot faithfully capture the operational
meaning of computations at infinity.  

\subsection{Structural Resolution for Productivity}
It has been strongly argued in~\cite{KPS12-2,KJS17,FK16} that the recently discovered \emph{structural resolution} can 
  help to combine the intuitive notion of computations at infinity and the coinductive reasoning \emph{\'a la} CoLP.
We explain the main idea behind structural resolution by means of an example.


\begin{example}\label{ex:from}
The coinductive program $P_4$ has the single clause

\vspace*{0.1in}

\noindent
0. $\mathtt{from(X, scons(X,Y))} \gets \mathtt{from(s(X),Y)}$

\vspace*{0.1in}

\noindent Given the query $? \gets \mathtt{from(0, X)}$, and writing
$\mathtt{[\_,\_]}$ as an abbreviation for the stream constructor
$\mathtt{scons}$ here, we have that the infinite term $t' =
\mathtt{from(0,[0,[s(0),[s(s(0)),\ldots]]])}$ is computable at
infinity by $P_4$ and is also contained in the greatest Herbrand model
for $P_4$. By the same argument as in Examples~\ref{ex:natstream2}
and~\ref{ex:fs}, coinductive reasoning on this query cannot be handled
by the loop detection mechanism of CoLP because the term $t'$ is
irrational.

Structural resolution allows us to separate the infinite derivation steps that compute $t'$ at infinity
into term rewriting and unification steps as shown below,
with term rewriting steps shown vertically and unification steps shown
horizontally. This separation makes it easy to see that $P_4$ is {\em
finitely observable}, in the sense that all of its derivations by term
rewriting alone terminate.

\begin{center}
\begin{tikzpicture}[scale=0.30,baseline=(current bounding box.north),grow=down,level distance=20mm,sibling distance=50mm,font=\footnotesize]
  \node { $\mathtt{from(0,X)}$};
  \end{tikzpicture}
$\stackrel{\{\mathtt{X} \mapsto [\mathtt{0},\mathtt{X}']\}}{\rightarrow}$
\begin{tikzpicture}[scale=0.30,baseline=(current bounding box.north),grow=down,level distance=20mm,sibling distance=60mm,font=\footnotesize ]
  \node { $\mathtt{from(0,[0, X'])}$}
          child { node {$\mathtt{from(s(0),X')}$}};
  \end{tikzpicture}
	$\stackrel{\{\mathtt{X}'\mapsto [\mathtt{s(0)},\mathtt{X}'']\}}{\rightarrow}$
	\begin{tikzpicture}[scale=0.30,baseline=(current bounding box.north),grow=down,level distance=20mm,sibling distance=60mm,font=\footnotesize ]
  \node { $\mathtt{from(0,[0, [s(0), X'']])}$}
	child { node{ $\mathtt{from(s(0),[s(0),X''])}$ }
          child { node {$\mathtt{from(s(s(0)),X'')}$}}};
  \end{tikzpicture}~
$\stackrel{\{\mathtt{X}'' \mapsto [\mathtt{s(s(0))}, \mathtt{X}''']\}}{\rightarrow}$
\end{center}
\end{example}

It is intuitively pleasing to represent sequences of term rewriting
reductions as trees. We call these \emph{rewriting trees} to mark
their resemblance to TRS~\cite{Terese}.  Full SLD-derivation steps can
be represented by transitions between rewriting trees. These
transitions are determined by most general unifiers of rewriting tree
leaves with program clauses; see Section~\ref{sec:cotrees} below, as
well as~\cite{JohannKK15}, for more detail. In \cite{FK15,JohannKK15}
this method of separating SLD-derivations into rewriting steps and
unification-driven steps is
called \emph{structural resolution}, or {\em S-resolution} for short.

With S-resolution in hand, we can define a logic program to be {\em
  (observationally) productive} if it is finitely observable, i.e., if
all of its rewriting trees are finite. Our ``good'' program above is
again productive, whereas the ``bad'' one is not --- but now
``productive'' means productive in this new observational sense.
Productivity in S-resolution corresponds to termination in TRS, see \cite{FK15}.
In addition, \cite{KJS17} presents 
an algorithm and an implementation that semi-decides observational 
productivity.

\emph{One question remains: if  termination is an effective pre-condition for semi-deciding inductive soundness in LP, 
can observational productivity in S-resolution become a similarly effective pre-condition for reasoning about computations at infinity?}

To answer this question, we present a complete study of inductive and coinductive properties of S-resolution, and
establish the following results:

\begin{itemize}
	\item S-resolution is inductively sound and complete relative to the least Herbrand model semantics;
	\item S-resolution is coinductively sound relative to the greatest Herbrand model semantics;
	\item Infinite observationally productive computations by S-resolution are sound and complete relative to SLD-computations at infinity;
\end{itemize}

The above results prove that indeed the notion of observational productivity simplifies reasoning about global productivity (given by SLD-computations at infinity).
 
 

However, thanks to separation of rewriting and substitution steps, S-resolution can be soundly used to lazily observe finite fragments of infinite irrational terms. We introduce an algorithm for such sound observations, we attach an implementation to this paper:~\url{https://github.com/coalp}.


\subsection{Paper overview}

The paper proceeds as follows.
In
Section~\ref{sec:cotrees} we introduce background definitions
concerning LP, including least and greatest complete Herbrand model
semantics and operational semantics of SLD- and S-resolution given by
reduction systems.  In Section~\ref{sec:semantics}, we prove the
soundness, and show the incompleteness, of S-resolution reductions
with respect to least Herbrand models.  In Section~\ref{sec:ST}, we
regain completeness of proof search by introducing rewriting trees and
rewriting tree transitions (which we call \emph{S-derivations}), and
proving the soundness and completeness of successful S-derivations
with respect to least Herbrand models. This completes the discussion
of inductive properties of proof-search by S-resolution, and lays the
foundation for developing a new coinductive operational semantics for
LP via S-resolution in Sections~\ref{sec:models} and~\ref{sec:cc}. In
Section~\ref{sec:models}, we define S-computations at infinity and
show that they are sound and complete relative to SLD-computations at infinity. We reconstruct
the standard soundness result for computations at infinity relative to
greatest complete Herbrand models, but now for S-computations at
infinity. 

In Section~\ref{sec:concl} we conclude and discuss related work.

\section{Preliminaries}\label{sec:cotrees}

In this section we introduce Horn clauses, and recall the declarative
(big-step) semantics of logic programs given by least and greatest
complete Herbrand models. We also introduce structural resolution
reduction by means of an operational (small-step) semantics. To enable
the analysis of coinductive semantics and infinite terms, we adopt the
standard view of terms as trees~\cite{Courcelle83,JaffarS86,Llo87}.

\subsection{First-Order Signatures, Terms, Clauses}

We write $\Nat^*$ for the set of all finite words over the set $\Nat$
of natural numbers. The length of $w\in\Nat^*$ is denoted $|w|$. The
empty word $\epsilon$ has length $0$; we identify $i \in \Nat$ and the
word $i$ of length $1$.  Letters from the end of the alphabet denote
words of any length, and letters from the middle of the alphabet
denote words of length $1$. The concatenation of $w$ and $u$ is
denoted $wu$; $v$ is a {\em prefix} of $w$ if there exists a $u$ such
that $w = vu$, and a {\em proper prefix} of $w$ if $u \not =
\epsilon$.

A set $L \subseteq \Nat^*$ is a \emph{(finitely branching) tree
  language} provided: i) for all $w \in \Nat^*$ and all $i,j \in
\Nat$, if $wj \in L$ then $w \in L$ and, for all $i<j$, $wi \in L$;
and ii) for all $w \in L$, the set of all $i\in \Nat$ such that $wi\in
L$ is finite.  A non-empty tree language always contains $\epsilon$,
which we call its {\em root}. The \emph{depth} of a tree language $L$
is the maximum length of a word in $L$.  A tree language is {\em
  finite} if it is a finite subset of $\Nat^*$, and {\em infinite}
otherwise. A word $w \in L$ is also called a {\em node} of $L$. If $w
= w_0w_1...w_l$ then $w_0w_1...w_k$ for $k < l$ is an {\em ancestor}
of $w$. The node $w$ is the \emph{parent} of $wi$, and nodes $wi$ for
$i \in \Nat$ are {\em children} of $w$.  A \emph{branch} of a tree
language $L$ is a subset $L'$ of $L$ such that, for all $w,v\in L'$,
$w$ is an ancestor of $v$ or $v$ is an ancestor of $w$. If $L$ is a
tree language and $w$ is a node of $L$, the \emph{subtree of $L$ at
  $w$} is $L \backslash w = \{v \mid wv \in L\}$.

A \emph{signature} $\Sigma$ is a non-empty set of \emph{function
  symbols}, each with an associated arity. The arity of $f \in \Sigma$
is denoted $\mathit{arity}(f)$. To define terms over $\Sigma$, we
assume a countably infinite set $\mathit{Var}$ of {\em variables}
disjoint from $\Sigma$, each with arity $0$. Capital letters from the
end of the alphabet denote variables in $\mathit{Var}$.  If $L$ is a
non-empty tree language and $\Sigma$ is a signature, then a
\emph{term} over $\Sigma$ is a function $t: L \rightarrow \Sigma \cup
\mathit{Var}$ such that, for all $w\in L$, $\mathit{arity}(t(w)) =
\;\,\mid \!\{i \mid wi \in L\}\!\mid$.  Terms are finite or infinite
according as their domains are finite or infinite.  A term $t$ has a
depth $\mathit{depth}(t) = \max\{|w| \mid \, w\in L\}$.  The subtree
$\mathit{subterm}(t,w)$ of $t$ at node $w$ is given by $t': (L
\backslash w) \rightarrow \Sigma \cup Var$, where $t'(v) = t(wv)$ for
each $wv\in L$. The set of finite (infinite) terms over a
signature $\Sigma$ is denoted by $\mathbf{Term}(\Sigma)$
($\mathbf{Term}^\infty(\Sigma)$). The set of {\em all} (i.e., finite
{\em and} infinite) terms over $\Sigma$ is denoted by
$\mathbf{Term}^\omega(\Sigma)$. Terms with no occurrences of variables
are \emph{ground}. We write $\textbf{GTerm}(\Sigma)$
($\textbf{GTerm}^\infty(\Sigma)$, $\mathbf{GTerm}^\omega(\Sigma)$) for
the set of finite (infinite, {\em all}) ground terms over $\Sigma$.

A \emph{substitution} over $\Sigma$ is a total function $\sigma:
\mathit{Var} \to \mathbf{Term}^{\omega}(\Sigma)$. 
Substitutions are extended from variables
to terms homomorphically: if $t\in \mathbf{Term}(\Sigma)$ and $\sigma
\in \mathbf{Subst}^\omega(\Sigma)$, then the {\em application}
$\sigma(t)$ is $(\sigma(t))(w) = t(w)$ if $t(w) \not \in
\mathit{Var}$, and $(\sigma(t))(w) = (\sigma(\mathtt{X}))(v)$ if $w =
uv$, $t(u) = \mathtt{X}$, and $\mathtt{X} \in \mathit{Var}$.

We say that $\sigma$
is a {\em grounding substitution for $t$} if $\sigma(t) \in
\textbf{GTerm}^{\omega}(\Sigma)$, and is just a {\em grounding
  substitution} if its codomain is $\textbf{GTerm}^{\omega}(\Sigma)$.
We write $\mathit{id}$ for the identity substitution. The set of all
substitutions over a signature $\Sigma$ is
$\mathbf{Subst}^\omega(\Sigma)$ and the set of all substitutions over
$\Sigma$ with only finite terms in their codomains is
$\mathbf{Subst}(\Sigma)$.  
Composition of substitutions is denoted by juxtaposition.  Composition
is associative, so we write $\sigma_3\sigma_2\sigma_1$ rather than
$(\sigma_3\sigma_2)\sigma_1$ or $\sigma_3(\sigma_2\sigma_1)$.

A substitution $\sigma \in \cat{Subst}(\Sigma)$ is a \emph{unifier}
for $t, u \in \cat{Term}(\Sigma)$ if $\sigma(t) = \sigma(u)$, and is a
\emph{matcher} for $t$ against $u$ if $\sigma(t) = u$.  If $t, u \in
\cat{Term}^\omega(\Sigma)$, then we say that $u$ is an {\em instance}
of $t$ if $\sigma(t) = u$ for some $\sigma \in
\cat{Subst}^\omega(\Sigma)$; note that if $t,u \in
\cat{Term}(\Sigma)$, i.e., if $t$ and $u$ are finite terms, then the
codomain of $\sigma$ can be taken, without loss of generality, to
involve only finite terms.  A substitution $\sigma_1\in
\cat{Subst}(\Sigma)$ is {\em more general} than a substitution
$\sigma_2 \in \cat{Subst}(\Sigma)$
if there exists a substitution $\sigma \in \cat{Subst}^\omega(\Sigma)$
such that $\sigma \sigma_1(\mathtt{X}) = \sigma_2(\mathtt{X})$ for
every $\mathtt{X} \in \mathit{Var}$. A substitution $\sigma \in
\cat{Subst}(\Sigma)$ is a {\em most general unifier} ({\em mgu}) for
$t$ and $u$, denoted $t \sim_\sigma u$, if it is a unifier for $t$ and
$u$ and is more general than any other such unifier. A {\em most
  general matcher} ({\em mgm}) $\sigma$ for $t$ against $u$, denoted
$t \prec_\sigma u$, is defined analogously. Both mgus and mgms are
unique up to variable renaming if they exist.  Unification is
reflexive, symmetric, and transitive, but matching is reflexive and
transitive only. Mgus and mgms are computable by Robinson's seminal
unification algorithm~\cite{Rob63}.

In many unification algorithms, the \emph{occurs check} condition is
imposed, so that substitution bindings of the form $\mathtt{X} \mapsto
t(\mathtt{X})$, where $t(\mathtt{X})$ is a term containing
$\mathtt{X}$, are disallowed. In this case, mgus and mgms can always
be taken to be {\em idempotent}, i.e., such that the sets of variables
appearing in their domains and codomains are disjoint. The occurs
check is critical for termination of unification algorithms, and this
is, in turn, crucial for the soundness of classical SLD-resolution;
see below.

In logic programming, a clause $C$ over $\Sigma$ is a pair $(A,
[B_0,...,B_n])$, where $A \in \mathbf{Term}(\Sigma)$ and $[B_0, \ldots
  B_n]$ is a list of terms in $\mathbf{Term}(\Sigma)$; such a clause
is usually written $A \gets B_0, \ldots , B_n$.  Note that the list of
terms can be the empty list $[\;]$. We will identify the singleton
list $[t]$ with the term $t$ when convenient.  The head $A$ of $C$ is
denoted $\mathit{head}(C)$ and the body $B_0, \ldots , B_n$ of $C$ is
denoted $\mathit{body}(C)$. A {\em goal clause} $G$ over $\Sigma$ is a
clause $? \gets B_0, \ldots, B_n$ over $\Sigma \cup \{?\}$, where $?$
is a special symbol not in $\Sigma \cup \mathit{Var}$. We abuse
terminology and consider a goal clause over $\Sigma$ to be a clause
over $\Sigma$. The set of all clauses over $\Sigma$ is denoted by
$\mathbf{Clause}(\Sigma)$.  A \emph{logic program} over $\Sigma$ is a
total function from a set $\{0,1,\dots,n\} \subseteq \Nat$ to the set
of non-goal clauses over $\Sigma$. The clause $P(i)$ is called the
$i^{th}$ clause of $P$. If a clause $C$ is $P(i)$ for some $i$, we
write $C \in P$.  The set of all logic programs over $\Sigma$ is
denoted $\mathbf{LP}(\Sigma)$.  

The \emph{predicate} of a clause $C$
is the top symbol of the term $\mathit{head}(C)$. The predicates of a
program are the predicates of its clauses.  We assume that all logic 
programs are written within first-order Horn logic, with proper syntactic 
checks implied on the predicate position. 
If this assumption is made, the algorithm of SLD-resolution as well as other alternative 
algorithms we consider in the following sections do not introduce any syntactic inconsistencies 
to the operational semantics. 

The {\em arity} of $P\in
\mathbf{LP}(\Sigma)$ is the number of clauses in $P$, i.e., is
$|\mathit{dom}(P)|$, and is denoted $\mathit{arity}(P)$. The {\em
  arity} of $C\in \mathbf{Clause}(\Sigma)$ is $|\mathit{body}(C)|$,
and is similarly denoted $\mathit{arity}(C)$.

We extend substitutions from variables to clauses and programs
homomorphically. We omit these standard definitions. The variables of
a clause $C$ can be renamed with ``fresh'' variables to get an
$\alpha$-equivalent clause that is interchangeable with $C$. We assume
variables have been renamed when convenient. This is standard and
helps avoid circular (non-terminating) unification and matching.

\subsection{Big-step Inductive and Coinductive Semantics for LP}

We recall the least and greatest complete Herbrand model constructions
for LP~\cite{Llo87}. We express the definitions in the form of a
big-step semantics for LP, thereby exposing duality of inductive and
coinductive semantics for LP in the style of~\cite{Sg12}. We start
by giving inductive interpretations to logic programs.

\begin{definition}\label{df:irules}
Let $P\in \cat{LP}(\Sigma)$. The {\em big-step rule for $P$} is given
by
\[\frac{P \models \sigma(B_1), \,\ldots \, ,P \models \sigma(B_n)}{P
  \models \sigma(A)}\] where $A \gets B_1, \ldots B_n$ is a clause in
$P$ and $\sigma \in \mathbf{Subst}^{\omega}(\Sigma)$ is a grounding
substitution.
\end{definition}

\noindent
Following standard terminology (see, e.g.,~\cite{Sg12}), we say that
an inference rule is \emph{applied forward} if it is applied from top
to bottom, and that it is \emph{applied backward} if it is applied
from bottom to top. If a set of terms is closed under forward
(backward) application of an inference rule, we say that it is {\em
  closed forward} (resp., {\em closed backward}) under that rule.  If
the $i^{th}$ clause of $P \in \textbf{LP}(\Sigma)$ is involved in an
application of the big-step rule for $P$, then we may say that we have
applied the {\em big-step rule for $P(i)$}.

\begin{definition}\label{def:model}
The {\em least Herbrand model} for $P \in \textbf{LP}(\Sigma)$ is the
smallest set $M_P \subseteq \cat{GTerm}(\Sigma)$ that is closed
forward under the big-step rule for $P$.
\end{definition}

\begin{example}
The least Herbrand model for $P_1$ is 
$\{\mathtt{nat(0)},$ $\mathtt{nat(s(0))},$ $\mathtt{nat(s(s(0)))},
\ldots\}$.
\end{example}

The requirement that $M_P\subseteq \cat{GTerm}(\Sigma)$ entails that
only ground substitutions $\sigma \in \cat{Subst}(\Sigma)$ are used in
the forward applications of the big-step rule involved in the
construction of $M_P$. Next we give coinductive interpretations to
logic programs. For this we do not impose any finiteness requirement
on the codomain terms of $\sigma$.

\begin{definition}\label{def:cmodel}
The {\em greatest complete Herbrand model} for $P \in
\textbf{LP}(\Sigma)$ is the largest set $M^{\omega}_P \subseteq
\cat{GTerm}^{\omega}(\Sigma)$ that is closed backward under the
big-step rule for $P$.
\end{definition}

\begin{example}
The greatest complete Herbrand model for $P_1$ is $\{\mathtt{nat(0)},$
$\mathtt{nat(s(0))},$ $\mathtt{nat(s(s(0)))},$ $\ldots\} \; \bigcup \;
\{\mathtt{nat(s(s(...)))}\}$.  Indeed, there is an infinite inference
for $\mathtt{nat(s(s(...)))}$ obtained by repeatedly applying the
big-step rule for $P_1(1)$ backward.
\end{example}

Definitions~\ref{def:model} and~\ref{def:cmodel} could alternatively
be given in terms of least and greatest fixed point operators, as in,
e.g.,~\cite{Llo87}. To ensure that $\cat{GTerm}(\Sigma)$ and
$\cat{GTerm}^{\omega}(\Sigma)$ are non-empty, and thus that the least
and greatest Herbrand model constructions are as intended, it is
standard in the literature to assume that $\Sigma$ contains at least
one function symbol of arity $0$.
We will make this assumption throughout the remainder of this paper.

\subsection{Small-step Semantics for LP}

Following~\cite{FK15}, we distinguish the following three kinds of
reduction for LP.

\begin{definition}
If $P \in \textbf{LP}(\Sigma)$ and $t_1, \ldots, t_n
\in \mathbf{Term}(\Sigma)$, then
\begin{itemize}
\item {\em SLD-resolution reduction}: $[t_1, \ldots, t_i, \ldots ,
  t_n] \leadsto_P [\sigma(t_1), \ldots, \sigma(t_{i-1}), \sigma(B_0),
  \ldots \sigma(B_m), \sigma(t_{i+1}), \ldots, \sigma(t_n)]$ if $A
  \gets B_0 , \ldots , B_m \in P$ and $t_i \sim_{\sigma} A$.

\vspace*{0.1in}

\item {\em rewriting reduction}: $[t_1, \ldots , t_i , \ldots ,
  t_n] \rightarrow_P [t_1, \ldots, t_{i-1},\sigma(B_0), \ldots
  \sigma(B_m), t_{i+1}, \ldots, t_n]$ if $A \gets B_0 , \ldots , B_m
  \in P$ and $A \prec_{\sigma} t_i$.

\vspace*{0.1in}

\item {\em substitution reduction}: $[t_1, \ldots , t_i , \ldots,
  t_n] \hookrightarrow_P [\sigma(t_1), \ldots, \sigma(t_i), \ldots ,
  \sigma(t_n)]$ if $A \gets B_0 , \ldots , B_m \in P$ and $t_i
  \sim_{\sigma} A$.
\end{itemize}
We assume, as is standard in LP, that all variables are \emph{renamed apart} when terms are matched or unified against the program clauses.
\end{definition}
	
If $r$ is any reduction relation, we will abuse terminology and call
any (possibly empty) sequence of $r$-reduction steps an {\em
  $r$-reduction}. When there exists no list $L$ of terms such that
$[t_1, \ldots , t_i , \ldots , t_n] \rightarrow_P L$ we say that
$[t_1, \ldots, t_n]$ is in {\em $\rightarrow$-normal form} with
respect to $P$.  We write $[t_1,...,t_n] \rightarrow^{\mu}_P$ to
indicate the reduction of $[t_1,...,t_n]$ to its $\rightarrow$-normal
form with respect to $P$ if this normal form exists, and to indicate
an infinite reduction of $[t_1,...,t_n]$ with respect to $P$
otherwise. We write $\rightarrow^n$ to denote rewriting by {\em at
  most} $n$ steps of $\rightarrow$, where $n$ is a natural number. We
will use similar notations for $\leadsto$ and $\hookrightarrow$ as
required.  Throughout this paper we may omit explicit mention of $P$
and/or suppress $P$ as a subscript on reductions when it is clear from
context.

We are now in a position to define the structural resolution reduction,
also called the {\em S-resolution reduction} for short. We have:

\begin{definition}
For $P \in \textbf{LP}(\Sigma)$, the {\em structural resolution
  reduction} with respect to $P$ is $\hookrightarrow^1_P \circ
\rightarrow^{\mu}_P$.
\end{definition}

It is not hard to see that the reduction relation $\leadsto_P$ models
traditional SLD-resolution steps~\cite{Llo87} with respect to $P$,
and, writing $\leadsto_{s\,P}$ for $\hookrightarrow^1_P \circ
\rightarrow^{\mu}_P$, that the reduction relation $\leadsto_{s\,P}$
models S-resolution steps with respect to $P$.  If an SLD-resolution,
rewriting, or S-resolution reduction with respect to $P$ starts with
$[t]$, then we say it is a reduction {\em for $t$} with respect to
$P$.  If there exists an $n$ such that $[t] \leadsto^n_P [\;]$ or $[t]
\leadsto^n_{s\,P} [\;]$, then we say that this reduction for $t$ is
\emph{inductively successful}. For SLD-resolution reductions this
agrees with standard logic programming terminology.

If we regard the term $t$ as a ``query", then we may regard the
composition $\sigma_n \circ \ldots \circ \sigma_1$ of the
substitutions $\sigma_1, \ldots , \sigma_n \in \cat{Subst}(\Sigma)$
involved in the steps of an inductively successful SLD-resolution
reduction for $t$ as an ``answer'' to this query, and we may think of
the reduction as computing this answer.  Such a composition for an
initial sequence of SLD-resolution reductions in a possibly
non-terminating SLD-resolution reduction for $t$ can similarly be
regarded as computing a partial answer to that query. We use this
terminology for rewriting and S-resolution reductions as well.

\begin{example}
The following are SLD-resolution, rewriting, and S-resolution
reductions, respectively, with respect to $P_2$:
\begin{itemize}
\item $[\mathtt{nats(X)}] \leadsto [\mathtt{nat(X')},
  \mathtt{nats(Y)}] \leadsto [\mathtt{nats(Y)}] \leadsto
  [\mathtt{nat(X'')}, \mathtt{nats(Y')}]\leadsto \ldots$

\vspace*{0.1in}

\item $[\mathtt{nats(X)}]$

\vspace*{0.1in}

\item $[\mathtt{nats(X)}] \rightarrow^{\mu} [\mathtt{nats(X)}]
  \hookrightarrow^1 [\mathtt{nats(scons(X',Y))}] \rightarrow^{\mu}
                 [\mathtt{nat(X')}, \mathtt{nats(Y)}]
                 \hookrightarrow^1 [\mathtt{nat(0)}, \mathtt{nats(Y)}]
                 \rightarrow^{\mu} [\mathtt{nats(Y)}]
                 \hookrightarrow^1 [\mathtt{nats(scons(X'',Y'))}]
                 \rightarrow^{\mu} \ldots$
\end{itemize}
\noindent
In the S-resolution reduction above, $[\mathtt{nats(X)}]
\rightarrow^{\mu} [\mathtt{nats(X)}]$ in 0 steps, since
$[\mathtt{nats(X)}]$ is already in $\rightarrow$-normal form. The
initial sequences of the SLD-resolution and S-resolution reductions
each compute the partial answer $\{\mathtt{X} \mapsto
\mathtt{scons(0,scons(X'', Y'))}\}$ to the query $\mathtt{nats(X)}$.
\end{example}

The observation that, even for coinductive program like $P_2$,
$\rightarrow^{\mu}$ reductions are finite and thus can serve as
measures of finite observation, has led to the following definition of
observational productivity in LP, first introduced in~\cite{KPS12-2}:

\begin{definition}\label{def:prod}
A program $P \in \cat{LP}(\Sigma)$ is {\em observationally productive}
if $\rightarrow_P$ is strongly normalising, i.e., if every rewriting
reduction with respect to $P$ is finite.
\end{definition}

\begin{example}
The programs $P_1, P_2, P_3, P_4,$ and $P_5$ are all observationally
productive, as is the program $P_7$ defined in
Example~\ref{ex:overlap} below. By contrast, the ``bad" program of
Example~\ref{ex:bad} and the program $P_6$ defined in
Example~\ref{ex:conn} below are not.
\end{example}

\noindent
A similar notion of observational productivity, in terms of strong
normalisation of term rewriting, has recently been introduced for
copatterns in functional programming~\cite{Basold2015}. 

A general analysis of observational productivity for LP is rather
subtle. Indeed, there are programs $P$ and queries $t$ for which there
are inductively successful SLD-resolution reductions, but for which
$P$ nevertheless fails to be observationally productive because there
exist no inductively successful S-resolution reductions.

\begin{example}\label{ex:conn}
Consider the graph connectivity program~\cite{SS86} $P_6$ given by:

\vspace*{0.05in}\noindent
0. $\mathtt{conn(X,Y)} \gets \mathtt{conn(X,Z)}, \mathtt{conn(Z,Y)}$\\
1. $\mathtt{conn(a,b)} \gets$\\
2. $\mathtt{conn(b,c)} \gets$

\vspace*{0.05in}\noindent Although there exist inductively successful
SLD-resolution reductions for $\mathtt{conn(X,Y)}$ with respect to
$P_6$, there are no such inductively successful S-resolution
reductions. Indeed, the only S-resolution reductions for
$\mathtt{conn(X,Y)}$ with respect to $P_6$ are infinite rewriting
reductions that, with each rewriting reduction, accumulate an
additional term involving $\mathtt{conn}$. A representative example of
such an S-resolution reduction is
\[ [\mathtt{conn(X,Y)}] \rightarrow [\mathtt{conn(X,X')},
   \mathtt{conn(X',Y)}] \rightarrow [\mathtt{conn(X,X'')},
  \mathtt{conn(X'',X')}, \mathtt{conn(X',Y)}] \rightarrow \ldots\]
\noindent
Thus, $P_6$ is not observationally productive.
\end{example}  

\vspace*{0.05in}

With this in mind, we first turn our attention to analysing the
inductive properties of S-resolution reductions.

\subsection{Inductive Properties of S-Resolution Reductions}\label{sec:semantics}

In this section, we discuss whether, and under which conditions,
S-resolution reductions are inductively sound and complete. First we
recall that SLD-resolution is inductively sound and
complete~\cite{Llo87}. The standard results of inductive soundness and
completeness for SLD-resolution~\cite{Llo87} can be summarised as:

\begin{theorem}\label{thm:sld}
Let $P \in \cat{LP}(\Sigma)$ and $t \in \cat{Term}(\Sigma)$.
\begin{itemize}
\item (Inductive soundness of SLD-resolution reductions) If $t
  \leadsto_P^n []$ for some $n$ and computes answer $\theta$, then
  there exists a term $t' \in \mathbf{GTerm}(\Sigma)$ such that $ t'
  \in M_P$ and $t'$ is an instance of $\theta(t)$.
\item (Inductive completeness of SLD-resolution reductions) If $t \in
  M_P$, then there exists a term $t' \in \mathbf{Term}(\Sigma)$ that
  yields an SLD-resolution reduction $t' \leadsto_P^n []$ that
  computes answer $\theta \in \cat{Subst}(\Sigma)$ such that $t$ is an
  instance of $\theta(t')$.
\end{itemize}
\end{theorem}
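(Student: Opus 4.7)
The plan is to follow the standard Lloyd-style argument, proving the two parts by separate inductions and relying on the classical Switching Lemma (independence of the selection rule) and Lifting Lemma as scaffolding, both of which are themselves proved by induction on the length of the SLD-refutation.

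For inductive soundness, I would induct on the length $n$ of the SLD-refutation $[t] \leadsto_P^n []$. In the base case $n = 1$, the refutation must apply a unit clause $A \gets$ with $t \sim_{\sigma_1} A$, so $\sigma_1(t) = \sigma_1(A)$, and any grounding of this term lies in $M_P$ by one forward application of the big-step rule for that clause; this yields the required $t'$ as a ground instance of $\theta(t)$ with $\theta = \sigma_1$. For the inductive step, let the first step be $[t] \leadsto_P [\sigma_1(B_1), \ldots, \sigma_1(B_m)]$ via a clause $A \gets B_1, \ldots, B_m$ with $t \sim_{\sigma_1} A$. I would invoke the Switching Lemma to split the remaining $n-1$ steps into independent sub-refutations, one per atom $\sigma_1(B_i)$, each of length strictly less than $n$. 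The induction hypothesis supplies, for each $i$, a grounding $\tau_i$ making $\tau_i$ applied to the sub-refutation's computed answer at $\sigma_1(B_i)$ a member of $M_P$. Using the renaming-apart convention, I would combine these $\tau_i$ into a single grounding $\tau$ whose action on each $\sigma_1(B_i)$ gives a ground term in $M_P$; then the big-step rule for $A \gets B_1,\ldots,B_m$ fires forward to place $\tau\sigma_1(A) = \tau\sigma_1(t)$ in $M_P$, and this ground term is an instance of $\theta(t)$ for $\theta = \sigma_n \circ \cdots \circ \sigma_1$.

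For inductive completeness, I would induct on the height $h$ of the big-step inference tree witnessing $t \in M_P$. When $h = 1$, $t$ arises by a single forward application of the big-step rule for a unit clause $A \gets$, so $t = \sigma(A)$ for some grounding $\sigma$; take $t' = A$ with the trivial one-step SLD-refutation whose computed answer is the mgu of $A$ with itself, and $t$ is an instance of $\theta(t')$. For $h > 1$, $t = \sigma(A)$ is obtained from premises $\sigma(B_1), \ldots, \sigma(B_m) \in M_P$ using a clause $A \gets B_1, \ldots, B_m$, each premise derivable by a strictly shorter inference. The induction hypothesis yields, for each $i$, a term $s_i$ and a successful SLD-refutation of $[s_i]$ computing some $\eta_i$ such that $\sigma(B_i)$ is an instance of $\eta_i(s_i)$. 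The Lifting Lemma then lifts each such refutation to a refutation of a renamed copy of $B_i$ itself whose computed answer is more general. Finally, I would prepend a single SLD-step via $A \gets B_1, \ldots, B_m$ starting from $t' := A$ to glue the lifted refutations into a single SLD-refutation of $[A]$; its overall computed answer $\theta$ then satisfies that $t = \sigma(A)$ is an instance of $\theta(t')$.

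The main obstacle is substitution bookkeeping rather than any conceptual subtlety: in the soundness direction the Switching Lemma must yield sub-refutations whose computed answers can be composed without variable clash, and in the completeness direction the Lifting Lemma must track how lifted mgus compose across independently derived sub-refutations and restrict correctly to the grounding $\sigma$ originally witnessing $t \in M_P$. Both are guaranteed by the standard hypothesis, already imposed in the excerpt, that variables are renamed apart at each reduction step and that mgus are idempotent courtesy of the occurs check; once these invariants are maintained, the compositions of substitutions propagate through exactly as expected.
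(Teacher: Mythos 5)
Your proposal is correct and takes essentially the same approach as the paper: the paper does not prove Theorem~\ref{thm:sld} itself but recalls it as the classical inductive soundness and completeness of SLD-resolution from~\cite{Llo87}, and your Lloyd-style argument (induction on the length of the refutation, with the switching lemma, for soundness; induction on the big-step derivation, with the lifting lemma, for completeness) is precisely the standard proof being cited, with the substitution bookkeeping you flag handled exactly as in that classical treatment.
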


We now show that, in contrast to SLD-resolution reductions,
S-resolution reductions are inductively sound but incomplete. We first
establish inductive soundness.

\begin{theorem}\label{thm:cs}(Inductive soundness of S-resolution reductions) 
If $t \ \leadsto_{s\,P}^n \ []$ for some $n$ and computes answer
$\theta$, then there exists a term $t' \in \mathbf{Term}(\Sigma)$ such
that $t' \in M_P$ and $t'$ is an instance of $\theta(t)$.
\end{theorem}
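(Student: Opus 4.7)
The plan is to reduce the claim to the already-established inductive soundness of SLD-resolution in Theorem \ref{thm:sld}. The underlying idea is that every successful S-resolution reduction $[t] \leadsto^n_{s\,P} [\,]$ can be simulated by an SLD-resolution reduction $[t] \leadsto^k_P [\,]$ whose computed answer $\theta'$ agrees with $\theta$ on the variables of $t$; once such a simulation is in hand, Theorem \ref{thm:sld} immediately yields the required ground instance of $\theta(t)$ in $M_P$.

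To construct the simulation I would treat the two kinds of component moves separately. A rewriting step $[t_1,\ldots,t_n] \rightarrow_P [t_1,\ldots,\sigma(B_0),\ldots,\sigma(B_m),\ldots,t_n]$ via a matcher $A \prec_\sigma t_i$ uses a substitution $\sigma$ whose domain is confined to the (assumed fresh) variables of $A$. Consequently $\sigma$ is already an mgu of $t_i$ and $A$ up to trivial extension by identity, and the step is simulated by a single SLD-step at position $i$ with the same clause, whose unifier acts as identity on the variables of all other $t_j$ and, in particular, on the variables of $t$. A substitution step $[t_1,\ldots,t_n] \hookrightarrow^1_P [\sigma(t_1),\ldots,\sigma(t_n)]$ via $t_i \sim_\sigma A$ is, by the definition of S-resolution, immediately followed by a $\rightarrow^{\mu}$-reduction to normal form. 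Because $\sigma(A) = \sigma(t_i)$, one may choose the very first step of this normalisation to fire at position $i$ using the same clause; unfolding definitions then shows that the composition of the $\hookrightarrow^1$ step and this first rewriting step is precisely an SLD-step at position $i$ with unifier $\sigma$. The remaining steps of the normalisation are simulated by the previous case.

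Concatenating these local simulations along the given S-resolution reduction yields an SLD-derivation $[t] \leadsto^k_P [\,]$. Its computed answer equals $\theta$ on the variables of $t$ because the only contributions to $\theta$ come from the $\hookrightarrow^1$ moves (matchers have fresh-variable domains and thus act as identity on query variables), and these contributions are preserved verbatim in the SLD simulation. Applying Theorem \ref{thm:sld} to this simulating derivation produces a term $t' \in \cat{GTerm}(\Sigma) \cap M_P$ that is an instance of $\theta'(t) = \theta(t)$, which is the desired conclusion.

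The main obstacle is the bookkeeping around variable freshness and the freedom to choose positions inside $\rightarrow^{\mu}$: one must argue that the $\rightarrow$-normal form is, up to renaming of fresh clause variables, insensitive to the order in which the individual rewriting steps are performed, so that the particular ordering required to align with the next $\hookrightarrow^1$ step is always available. This in turn relies on the fact that each rewriting step acts at a single position via a matcher whose domain is disjoint from the variables of the other list entries, so that rewriting steps at different positions commute up to fresh renaming.
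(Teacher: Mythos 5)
Your reduction to Theorem~\ref{thm:sld} founders on the step that pairs each $\hookrightarrow^1$ move with a subsequent rewriting step \emph{using the same clause}. The definition of $\leadsto_{s\,P}$ only requires that the substitution step be followed by \emph{some} reduction of the instantiated list to \emph{a} $\rightarrow$-normal form; for overlapping programs that normalisation, although it must eventually rewrite the entry $\sigma(t_i)$ (otherwise the result would not be normal), is free to unfold it with a \emph{different} clause whose head also matches $\sigma(t_i)$. Concretely, take the program with clauses $0.\ \mathtt{p(c)} \gets \mathtt{r(c)}$, \ $1.\ \mathtt{p(c)} \gets \mathtt{q(c)}$, \ $2.\ \mathtt{q(c)} \gets$\,, and the query $\mathtt{p(X)}$. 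Here $[\mathtt{p(X)}]$ is already in $\rightarrow$-normal form, the $\hookrightarrow^1$ step may unify $\mathtt{p(X)}$ with clause $0$, giving $\sigma=\{\mathtt{X}\mapsto\mathtt{c}\}$, and the following $\rightarrow^{\mu}$ may normalise $[\mathtt{p(c)}]$ via clause $1$ and then clause $2$ down to $[\;]$. This is a legitimate reduction $[\mathtt{p(X)}] \leadsto^{2}_{s\,P} [\;]$ computing $\theta=\{\mathtt{X}\mapsto\mathtt{c}\}$, yet it contains no rewriting step at that position with clause $0$: your prescribed fused SLD-step would produce the goal $[\mathtt{r(c)}]$, a dead end, and the remaining rewriting steps of the given derivation can no longer be replayed; choosing a different normalisation is not available to you, since the hypothesis only provides this particular successful derivation. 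So the real obstacle is not the commutation of rewrites at different positions (which does hold, as each rewrite touches only its own entry), but the possible clause mismatch at the same position, which your argument never addresses; in particular the claim that the simulating SLD-derivation computes an answer equal to $\theta$ on the variables of $t$ is unsupported.

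The natural repair is a lifting-style simulation: resolve $t_i$ against the clause actually used by the normalisation, with an mgu that is more general than the composition of $\sigma$ with the later matcher, maintaining the invariant that each list in the S-reduction is an instance of the corresponding SLD-goal. But then the computed answer $\theta'$ of the simulating SLD-refutation is only \emph{more general} than $\theta$ on the variables of $t$, so the existential statement of Theorem~\ref{thm:sld} no longer suffices; you would need the standard universal form of SLD-soundness (every ground instance of $\theta'(t)$ is in $M_P$) to conclude that some ground instance of $\theta(t)$ lies in $M_P$. Alternatively one can avoid the simulation altogether and argue directly by induction on $n$ in $\leadsto^{n}_{s\,P}$, adapting the classical soundness proof for SLD-resolution clause by clause; that is the route the paper itself takes.
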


\begin{proof}
The proof is by induction on $n$ in $\leadsto_{s\,P}^n$. It is a
simple adaptation of the soundness proof for SLD-resolution reductions
given in, e.g.,~\cite{Llo87}.
\end{proof}

To show that S-resolution reductions are not inductively complete, it
suffices to provide one example of a program $P$ and a term $t$ such
that $P \models \theta(t)$ but no inductively successful S-resolution
reduction exists for $t$. We will in fact give two such examples, each
of which is representative of a different way in which S-resolution
reductions can fail to be inductively complete.

\begin{example}\label{ex:incomplete}
Consider $P_6$ and the S-resolution reduction shown in
Example~\ref{ex:conn}.  The instantiation $\mathtt{conn(a,c)}$ of
$\mathtt{conn(X,Y)}$ is in the least Herbrand model of $P_6$, but
there are no finite S-resolution reductions, and therefore no
inductively successful S-resolution reductions, for $P_6$ and the
query $\mathtt{conn(X,Y)}$. This shows that programs that are not
observationally productive need not be inductively complete.
\end{example}

In light of Example~\ref{ex:incomplete} it is tempting to try to prove
the inductive completeness of S-resolution reductions for
observationally productive logic programs only. However, this would
not solve the problem, as the following example confirms:

\begin{example}\label{ex:overlap}
Consider the program $P_7$ given by:

\vspace*{0.05in}\noindent
0. $\mathtt{p(c)} \gets$\\
1. $\mathtt{p(X)} \gets \mathtt{q(X)}$

\vspace*{0.05in}\noindent We have that $P_7 \models \mathtt{p(c)}$ for
the instantiation $\mathtt{p(c)}$ of $\mathtt{p(X)}$, but there is no
inductively successful S-resolution reduction for $P_7$ and
$\mathtt{p(X)}$.
\end{example}

Program $P_7$ is an example of \emph{overlapping} program, i.e., a
program containing clauses whose heads unify. We could show that, for
programs that are both observationally productive and non-overlapping,
S-resolution reductions are inductively complete. However, restricting
attention to non-overlapping programs would seriously affect
generality of our results, and would have the effect of making
S-resolution even less suited for inductive proof search than
SLD-resolution. We prefer instead to refine S-resolution so that it is
inductively complete for {\em all} programs. The question is whether
or not such refinement is possible.

An intuitive answer to this question comes from reconsidering
Example~\ref{ex:overlap}. There, the interleaving of
$\rightarrow^{\mu}$ and $\hookrightarrow^1$ has the effect of
restricting the search space. Indeed, once the rewriting portion of
the only possible S-resolution reduction on $\mathtt{p(X)}$ is
performed, the new subgoal $\mathtt{q(X)}$ prevents us from revisiting
the initial goal $\mathtt{p(X)}$ and unifying it with the clause
$P_7(0)$, as would be needed for an inductively successful
S-resolution reduction. This is how we lose inductive completeness of
the proof search.

One simple remedy would be to redefine S-resolution reductions to be
$(\hookrightarrow^1 \circ \rightarrow^n)$-reductions, where $n$ ranges
over all non-negative integers. This would indeed restore inductive
completeness of S-resolution for overlapping programs. But it would at
the same time destroy our notion of observational productivity, which
depends crucially on $\rightarrow^{\mu}$. An alternative solution
would keep our definitions of S-resolution reductions and
observational productivity intact, but also find a way to keep track
of all of the unification opportunities arising in the proof search.
This is exactly the route we take here.

Kowalski~\cite{Kow74} famously observed that \emph{Logic Programming =
  Logic + Control}. For Kowalski, the logic component was given by
SLD-resolution reductions, and the control component by an algorithm
coding the choice of search strategy. As it turns out, SLD-resolution
reductions are sound and complete irrespective of the control
component.  What we would like to do in this paper is revise the very
logic of LP by replacing SLD-resolution reductions with S-resolution
reductions. 
As it turns out, defining a
notion of S-resolution that is both inductively complete and capable
of capturing observational productivity requires imposing an
appropriate notion of control on this logic. In the next section we
therefore define S-resolution in terms of rewriting trees.
 Rewriting trees allow us to neatly integrate
precisely the control on S-resolution reductions needed to achieve
both of these aims for the underlying logic of S-resolution
reductions. We thus arrive at our own variant of Kowalski's formula,
namely \emph{Structural Logic Programming = Logic + Control} --- but
now the logic is given by S-resolution reductions and the control
component is captured by rewriting trees. The remainder of the paper
is devoted to developing the above formula into a formal theory.

\section{Inductive Soundness and Completeness of
  Structural Resolution}\label{sec:ST}
	
To ensure that S-resolution reductions are inductively complete, we
need to impose more control on the rewriting reductions involved in
them. To do this, we first note that the rewriting reduction in
Example~\ref{ex:overlap} can be represented as the tree

\vspace*{-0.15in}

\begin{center}
\begin{tikzpicture}[level 1/.style={sibling distance=15mm},
level 2/.style={sibling distance=15mm}, level 3/.style={sibling
distance=15mm},scale=.8,font=\footnotesize,baseline=(current bounding
box.north),grow=down,level distance=9mm]
\node (root) {$\mathtt{p(X)}$}
	child { node {$\mathtt{q(X)}$}};
 \end{tikzpicture}
\end{center}

Now, we would also like to reflect within this tree the fact that
$\mathtt{p(X)}$ to unifies with the head of clause $P_7(0)$ and, more
generally, to reflect the fact that any term can, in principle, unify
with the head of any clause in the program. We can record these
possible unifications in tree form, as follows:

\vspace*{-0.1in}

\begin{center}
\begin{tikzpicture}[level 1/.style={sibling distance=15mm},
level 2/.style={sibling distance=15mm}, level 3/.style={sibling
distance=15mm},scale=.8,font=\footnotesize,baseline=(current bounding
box.north),grow=down,level distance=9mm]
\node (root) {$\mathtt{p(X)}$}
	child { node {$P_7(0)?$}}
  	child { node {$\mathtt{q(X)}$}
		child { node {$P_7(0)?$}}
        	child { node {$P_7(1)?$}}};
  \end{tikzpicture}
\end{center}

We can now follow-up each of these possibilities and in this way
extend our proof search. To do this formally, we distinguish two kinds
of nodes: {\em and-nodes}, which capture terms coming from clause
bodies, and {\em or-nodes}, which capture the idea that every term
can, in principle, match several clause heads. We also introduce {\em
  or-node variables} to signify the possibility of unifying a term
with the head of a clause when the matching of that term against that
clause head fails. This careful tracking of possibilities allows us to
construct the inductively successful S-resolution reduction for
$\mathtt{p(X)}$ and program $P_7$ shown in Figure~\ref{fig:overlap}.
The figure depicts two rewriting trees, each modelling all possible
rewriting reductions for the given query (represented as a goal
clause) with respect to $P_7$. Rewriting trees have alternating levels
of or-nodes and and-nodes, as well as or-node variables ($X_1$, $X_2$,
and $X_3$ in the figure) ranging over rewriting trees. By unifying
$\mathtt{p(X)}$ with $P_7(0)$ we replace the or-node variable $X_1$ in
the first rewriting tree with a new rewriting tree (in this case
consisting of just the single node $\mathtt{p(c) \gets}$) to
transition to the second rewriting tree shown. When a node contains a
clause, such as $\mathtt{p(c) \gets}$, that has an empty body, it is
equivalent to an empty subgoal. Thus, the underlined subtree of the
second rewriting tree in Figure~\ref{fig:overlap} represents the
inductively successful S-resolution reduction $P_7 \vdash
\mathtt{p(c)} \rightarrow []$.

\begin{figure*}
\begin{center}
\begin{tikzpicture}[level 1/.style={sibling distance=15mm},
level 2/.style={sibling distance=25mm},
level 3/.style={sibling distance=25mm},scale=.6,font=\scriptsize,baseline=(current bounding box.north),grow=down,level distance=11mm]
\hspace*{-0.35in}\node (root) {$? \gets   \mathtt{p(X)}$}
	child { node {$\mathtt{p(X)}$} child { node {$X_1$}} child {
                   node {$\mathtt{p(X) \gets q(X)}$} child {node{
                   $\mathtt{q(X)}$} child { node {$X_2$}} child { node
                   {$X_3$}} } } }; \end{tikzpicture}
                   $\stackrel{\{\mathtt{X} \mapsto \mathtt{c}\}}{\rightarrow_{X_1}}$
\begin{tikzpicture}[level 1/.style={sibling distance=25mm},
level 2/.style={sibling distance=25mm},
level 3/.style={sibling distance=25mm},scale=.6,font=\scriptsize,baseline=(current bounding box.north),grow=down,level distance=11mm]
\hspace*{0.35in} \node (root) {\underline{$? \gets   \mathtt{p(c)}$}}
	child { node {\underline{$\mathtt{p(c)}$}} child { node
                   {\underline{$\mathtt{p(c) \gets}$}}} child { node
                   {$\mathtt{p(c) \gets q(c)}$} child {node{
                   $\mathtt{q(c)}$} child { node {$X_2$}} child { node
                   {$X_3$}} } }
                   }; \end{tikzpicture}\hspace*{-0.2in} \end{center} 
\caption{\footnotesize{A tree transition for the overlapping program
    $P_7$ and the goal clause $\gets \mathtt{p(X)}$. Underlined in the
    second rewriting tree is a inductively successful S-resolution
    reduction.}} \vspace*{-0.1in}\label{fig:overlap}\vspace*{0.2in}
\end{figure*}

\subsection{Modeling $\rightarrow^{\mu}$ by Rewriting Trees}

We now proceed to define the construction formally. For this, we first
observe that a clause $C$ over a signature $\Sigma$ that is of the
form $A \gets B_0, \ldots , B_n$ can be naturally represented as the
total function (also called $C$) from the finite tree language $L =
\{\epsilon,0,...,n\}$ of depth $1$ to $\mathbf{Term}(\Sigma)$ such
that $C(\epsilon) = A$ and $C(i) = B_i$ for $i = \mathit{dom}(C)
\setminus \{\epsilon\}$. With this representation of clauses in hand,
we can formalise our notion of a rewriting tree.

\begin{definition}\label{def:CT}
Let $V_R$ be a countably infinite set of variables disjoint from
$\mathit{Var}$. If $P\in \cat{LP}(\Sigma)$, $C\in
\cat{Clause}(\Sigma)$, and $\sigma \in \mathbf{Subst}(\Sigma)$ is
idempotent, then the tree $\kw{rew}(P,C, \sigma)$ is the function
$T:\mathit{dom}(T) \rightarrow \cat{Term}(\Sigma) \cup
\cat{Clause}(\Sigma) \cup V_R$, where $\mathit{dom}(T) \neq \emptyset$
is a tree language defined simultanously with $\kw{rew}(P,C,\sigma)$,
such that:
\begin{enumerate}
\item $T(\epsilon) = \sigma(C)$ and, for all $i \in \mathit{dom}(C)
  \setminus \{\epsilon\} $, $T(i) = \sigma(C(i))$.
\item For $w\in\mathit{dom}(T)$ with $|w|$ even and $|w| > 0$, $T(w)
  \in \cat{Clause}(\Sigma) \cup V_R$. Moreover,
\begin{enumerate}
\item If $T(w) \in V_R$, then $\{j\mid wj \in \mathit{dom}(T)\} =
  \emptyset$.
\item If $T(w) = B \in \cat{Clause}(\Sigma)$, then there exists a
  clause $P(i)$ and a $\theta \in \cat{Subst}(\Sigma)$ such that $\mathit{head}(P(i)
  \prec_\theta \mathit{head}(B)$.
Moreover, for every $j \in \mathit{dom}(P(i))
  \setminus \{\epsilon\}$, $wj\in \mathit{dom}(T)$ and $T(wj) =
  \sigma(\theta(P(i)(j)))$.
\end{enumerate}
\item For $w\in\mathit{dom}(T)$ with $|w|$ odd, $T(w) \in
  \cat{Term}(\Sigma)$. Moreover, for every $i\in \mathit{dom}(P)$, we
  have 
\begin{enumerate}
\item[(a)] $wi\in \mathit{dom}(T)$.
\item[(b)] $T(wi) =
    \begin{cases}
 \sigma(\theta(P(i))) & \text{if } \mathit{head}(P(i)) \prec_\theta T(w) \\
\text{a fresh } X\in V_R
 &\text{otherwise}
    \end{cases}$
\end{enumerate}
\item No other words are in $\mathit{dom}(T)$.
\end{enumerate}
$T(w)$ is an {\em or-node} if $|w|$ is even, and an {\em and-node} if
$|w|$ is odd.

We assume, as is standard in LP, that all variables are \emph{renamed apart} when terms are matched or unified against the program clauses.

\end{definition}

If $P \in \cat{LP}(\Sigma)$, then $T$ is a {\em rewriting tree for
  $P$} if it is either the empty tree or $\kw{rew}(P,C,\sigma)$ for
some $C$ and $\sigma$. Since mgms are unique up to variable renaming,
$\kw{rew}(P,C, \sigma)$ is as well.  A rewriting tree for a program
$P$ is finite or infinite according as its domain is finite or
infinite.  We write $\mathbf{Rew}(P)$ ($\mathbf{Rew}^\infty(P)$,
$\mathbf{Rew}^\omega(P)$) for the set of all finite (infinite, {\em
  all}) rewriting trees for $P$.

This style of tree definition mimics the classical style of defining
terms as maps from a tree language to a given
domain~\cite{Llo87,Courcelle83}. As with tree representations of
terms, {\em arity constraints} are imposed on rewriting trees. The
arity constraints in items 2b and 3a specify that the arity of an
and-node is the number of clauses in the program and the arity of an
or-node is the number of terms in its clause body. The arity
constraint in item 2a specifies that or-node variables must have arity
$0$. Or-node variables indicate where in a rewriting tree substitution
can take place.

\begin{example}
The rewriting trees $\kw{rew}(P_3,? \gets \mathtt{fibs(0,s(0),X)},
\mathit{id})$, $\kw{rew}(P_3,$ $? \gets
\mathtt{fibs(0,s(0),cons(0,S))}, \mathit{id})$, and $\kw{rew}(P_3,?
\gets \mathtt{fibs(0,s(0),cons(0,S))},$ $\mathtt{\{Z
  \mapsto s(0)\}})$ are shown in Figure~\ref{fig:fibs2}.  Note the
or-node variables and the arities.  An or-node can have arity $0$,
$1$, or $2$ according as its clause body contains $0$, $1$, or $2$
terms, and every and-node has arity $3$ because $P_3$ has three
clauses.
\end{example}

Although perhaps mysterious at first, the third parameter $\sigma$ in
Definition~\ref{def:CT} for $T = \kw{rew}(P,C,\sigma)$ is necessary
account for variables occurring in $T$ not affected by mgms computed
during $T$'s construction. It plays a crucial role in ensuring that
applying a substitution to a rewriting tree again yields a rewriting
tree~\cite{JohannKK15}. We have:

\begin{definition}\label{def:substs}
Let $P \in \textbf{LP}(\Sigma)$, $C \in \mathbf{Clause}(\Sigma)$,
$\sigma,\sigma' \in \mathbf{Subst}(\Sigma)$ idempotent, and $T =
\kw{rew}(P,C,\sigma)$. Then the rewriting tree $\sigma'(T)$ is defined
as follows:
\begin{itemize}
\item for every $w \in \mathit{dom}(T)$ such that $T(w)$ is an
  and-node or non-variable or-node, $(\sigma'(T))(w) = \sigma'(T(w))$.
\item for every $wi \in \mathit{dom}(T)$ such that $T(wi) \in V_R$, if
  $\mathit{head}(P(i)) \prec_\theta \sigma'(T)(w)$, then
  $(\sigma'(T))(wiv) = \kw{rew}(P,\theta(P(i)),\sigma'\sigma)(v)$.
  (Note that $v = \epsilon$ is possible.)  If no mgm of
  $\mathit{head}(P(i))$ against $\sigma'(T)(w)$ exists, then
  $(\sigma'(T))(wi) = T(wi)$.
\end{itemize}
\end{definition}

\noindent
Both conditions in the above definition are critical for ensuring that
$\sigma'(T)$ satisfies Definition~\ref{def:CT}. We then have the
following substitution theorem for rewriting trees. It is proved
in~\cite{JohannKK15}. The operation of substitution on rewriting trees is also introduced  in the same way in~\cite{BonchiZ15}.

\begin{theorem}\label{prop:sub-props}
Let $P \in \textbf{LP}(\Sigma)$, $C \in \cat{Clause}(\Sigma)$, and
$\theta, \sigma \in \mathbf{Subst}(\Sigma)$. Then $\theta(\kw{rew}
(P, C, \sigma)) = \kw{rew} (P, C, \theta\sigma)$.
\end{theorem}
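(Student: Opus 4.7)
My plan is to prove the tree equation $\theta(\kw{rew}(P,C,\sigma)) = \kw{rew}(P,C,\theta\sigma)$ by induction on the depth $|w|$ of positions in the rewriting trees, showing simultaneously that the two trees have identical domains and identical labels at every position. The case analysis tracks Definition~\ref{def:CT} and Definition~\ref{def:substs} in lock-step: the root $w=\epsilon$, the odd-depth term nodes together with their and-node children, and the even-depth clause or or-node-variable nodes together with their term children.

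The base case reduces to $\theta(\sigma(C)) = (\theta\sigma)(C)$, which is just associativity of substitution application, and the same observation handles the children of the root at depth $1$. The even-to-odd inductive step is straightforward bookkeeping: if an even-depth node carries a clause $B$, its body children are of the form $\sigma(\theta'(P(i)(j)))$ for an mgm $\theta'$ determined by $B$, and $\theta$ distributes over this expression using the inductive hypothesis applied at $w$.

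The substantive step is the odd-to-even case. Fix an odd-depth position $w$ with value $t_w$ in $\kw{rew}(P,C,\sigma)$, so that by the inductive hypothesis $\kw{rew}(P,C,\theta\sigma)$ carries value $\theta(t_w)$ at $w$. For each clause $P(i)$, let $\theta_1$ be the mgm of $\mathit{head}(P(i))$ against $t_w$ when it exists, and $\theta_2$ the mgm of $\mathit{head}(P(i))$ against $\theta(t_w)$ when it exists. I would first observe that $\theta\theta_1$ is itself a matcher of $\mathit{head}(P(i))$ against $\theta(t_w)$, so existence of $\theta_1$ forces existence of $\theta_2$; moreover, because a matcher's values on each variable of $\mathit{head}(P(i))$ are uniquely determined by the target term, and because the rename-apart convention makes the variables of $P(i)$ fresh for $\sigma$, $\theta$ and $t_w$, the substitutions $\theta_2$ and $\theta\theta_1$ agree on $\mathit{Var}(P(i))$. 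The matching-succeeds sub-case then reduces to checking $\theta\sigma\theta_1(P(i)) = \theta\sigma\theta_2(P(i))$, which follows from the above agreement combined with the fact that variables appearing in $\theta_1(P(i))$ come either from fresh body variables of $P(i)$ (on which $\theta$ is the identity) or from $t_w$ (which lies in the image of the idempotent $\sigma$, so that the additional inner $\theta$ factor collapses). The matching-fails sub-case is governed by Definition~\ref{def:substs}: either $\theta_2$ fails as well and both trees place a fresh or-node variable at $wi$, or $\theta_2$ newly succeeds and both trees extend with the subtree $\kw{rew}(P,\theta_2(P(i)),\theta\sigma)$, after which the induction is picked up on this fresh subtree.

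The principal obstacle is precisely this odd-to-even step: one must justify carefully that the two mgms line up and that the ambient substitutions behave well under composition, both of which depend crucially on the idempotence assumptions in Definition~\ref{def:CT} and on the rename-apart discipline on clauses of $P$. A secondary technicality, worth settling before launching the main induction, is to verify that $\theta\sigma$ is itself idempotent (or can be replaced by an equivalent one) so that $\kw{rew}(P,C,\theta\sigma)$ is even well-defined in the sense of Definition~\ref{def:CT}; this again follows from the rename-apart convention.
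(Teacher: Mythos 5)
The paper does not actually prove Theorem~\ref{prop:sub-props}: it states the result and defers the proof to~\cite{JohannKK15}, so there is no in-paper argument to compare yours against line by line. Your proposal --- a lock-step induction on node depth showing equality of domains and labels, with the substantive work in the odd-to-even step where the mgm $\theta_1$ of $\mathit{head}(P(i))$ against $t_w$ is related to the mgm $\theta_2$ against $\theta(t_w)$ via the matcher $\theta\theta_1$, exploiting idempotence and the rename-apart convention --- is the natural argument for this statement and is sound in outline; it matches the style of the cited development rather than contradicting anything in this paper. The two points you flag yourself are indeed the genuine pressure points and deserve explicit treatment rather than a passing remark: (i) Definition~\ref{def:CT} requires the third argument of $\kw{rew}$ to be idempotent, and for arbitrary $\theta,\sigma \in \mathbf{Subst}(\Sigma)$ the composite $\theta\sigma$ need not be idempotent, so the theorem implicitly assumes the variable-hygiene conditions (clause variables renamed apart, $\theta$ not reintroducing variables in the domain of $\sigma$) under which $\kw{rew}(P,C,\theta\sigma)$ is well defined --- your appeal to the rename-apart discipline is the right justification but should be spelled out; and (ii) in the ``newly succeeds'' sub-case you need the small auxiliary fact that the subtree of $\kw{rew}(P,C,\theta\sigma)$ rooted at $wi$ coincides with $\kw{rew}(P,\theta_2(P(i)),\theta\sigma)$, which is not literally the inductive hypothesis but a separate (easy) lemma about $\kw{rew}$ generating its own subtrees. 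Neither point breaks your route; with those two items made explicit the proof goes through as you describe.
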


\begin{figure*}
\begin{tikzpicture}[level 1/.style={sibling distance=15mm},
level 2/.style={sibling distance=15mm}, level 3/.style={sibling
  distance=15mm},scale=.6,font=\scriptsize,baseline=(current bounding
box.north),grow=down,level distance=10mm]
\hspace*{0.2in}
\node (root) {$? \gets
  \mathtt{fibs(0,s(0),X)} \hspace*{0.4in}  \rightarrow_{X_3}$  \hspace*{-0.4in}}
 	child { node {$\mathtt{fibs(0,s(0),X)}$}
                   child { node {$X_1$}}
         		child { node {$X_2$}}
			child { node {$X_3$}}
};
  \end{tikzpicture}\hspace*{-0.4in}\vspace*{0.2in}	
\begin{tikzpicture}[level 1/.style={sibling distance=15mm},
level 2/.style={sibling distance=15mm},
level 3/.style={sibling distance=15mm},
level 4/.style={sibling distance=10mm},
level 5/.style={sibling distance=10mm},scale=.6,font=\scriptsize,baseline=(current bounding box.north),grow=down,level distance=10mm]
\hspace*{-0.4in}
  \node (root) {$? \gets
  \mathtt{fibs(0,s(0),[0,S])}$}
	child { node {$\mathtt{fibs(0,s(0),[0,S])}$}
                   child { node {$X_1$}}
									child { node {$X_2$}}
									child [sibling distance=50mm] { node {$\mathtt{fibs(0,s(0),[0,S])} \gets \mathtt{add(0,s(0),Z)}, \mathtt{fibs(s(0),Z,S)}$}
									   child  [sibling distance=50mm] { node {$\mathtt{add(0,s(0),Z)}$}
										  child { node {$X_4$}}
									child { node {$X_5$}}
									child { node {$X_6$}}}
									child  [sibling distance=30mm]  { node {$\mathtt{fibs(s(0),Z,S)}$}
									  child { node {$X_7$}}
									child { node {$X_8$}}
									child { node {$X_9$}}}
									}
									};
 \end{tikzpicture}
\begin{tikzpicture}[level 1/.style={sibling distance=15mm},
level 2/.style={sibling distance=15mm},
level 3/.style={sibling distance=15mm},
level 4/.style={sibling distance=10mm},
level 5/.style={sibling distance=10mm},scale=.6,font=\scriptsize,baseline=(current bounding box.north),grow=down,level distance=10mm]
\hspace*{0.5in}  \node (root) {$\rightarrow_{X_4}$ \hspace*{0.5in} $? \gets
    \mathtt{fibs(0,s(0),[0,S])}$ \hspace*{1in} $\rightarrow_{X_9} \ldots$}
	child { node {$\mathtt{fibs(0,s(0),[0,S])}$}
                   child { node {$X_1$}}
			child { node {$X_2$}}
			child [sibling distance=55mm] { node {$\mathtt{fibs(0,s(0),[0,S])} \gets \mathtt{add(0,s(0),s(0))}, \mathtt{fibs(s(0),s(0),S)}$}
         		   child  [sibling distance=45mm] { node
                             {$\mathtt{add(0,s(0),s(0))}$}
										  child  [sibling distance=30mm]  { node {$\mathtt{add(0,s(0),s(0))} \gets$ }}
									child { node {$X_5$}}
									child { node {$X_6$}}}
									child  [sibling distance=30mm]  { node {$\mathtt{fibs(s(0),s(0),S)}$}
									  child { node {$X_7$}}
									child { node {$X_8$}}
									child { node {$X_9$}}}
									}
									};
\end{tikzpicture}
\caption{\footnotesize{An initial fragment of an S-derivation for
    $P_3$ and $\mathtt{fibs(0,s(0),X)}$. The three rewriting
    trees shown are $\kw{rew}(P_3,? \gets \mathtt{fibs(0,s(0), X)},
    id)$, $\kw{rew}(P_3,$ $? \gets \mathtt{fibs(0,s(0),cons(0,S))},
    id)$, and $\kw{rew}(P_3,? \gets \mathtt{fibs(0,s(0),cons(0,S))},$
    $\mathtt{\{Z \mapsto s(0)\}})$, respectively.  To save space in
    the figure we abbreviate $\mathtt{cons}(\_,\_)$ by $[\_,\_]$, and
    similarly below.}}\label{fig:fibs2}
\end{figure*}

We can now formally establish the relation between rewriting
reductions and rewriting trees. We first have the following
proposition, which is an immediate consequence of
Definitions~\ref{def:prod} and~\ref{def:CT}:

\begin{proposition}\label{prop:RT-prod}
$P\in \cat{LP}(\Sigma)$ is observationally productive iff, for every
  term $t \in \cat{Term}(\Sigma)$ and every substitution $\sigma \in
  \cat{Subst}(\Sigma)$, $\kw{rew}(P,?\gets t, \sigma)$ is finite.
\end{proposition}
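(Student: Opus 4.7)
The plan is to prove the equivalence via K\"onig's Lemma together with a tight correspondence between infinite branches of rewriting trees and infinite rewriting reductions. First I would observe that, by the arity constraints of Definition~\ref{def:CT} (items 2b and 3a), every rewriting tree is finitely branching: each and-node has exactly $|\mathit{dom}(P)|$ children and each or-node has exactly as many children as the body length of its clause. Hence K\"onig's Lemma gives: a rewriting tree is infinite iff it contains an infinite branch.

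Next I would set up the central correspondence. An infinite branch of $\kw{rew}(P, ?\gets t, \sigma)$ gives, and is given by, an infinite chain of single-term rewriting steps starting from $\sigma(t)$. Reading off the branch: the and-node at depth $2k+1$ carries a term $s_k$ (with $s_0 = \sigma(t)$); its on-branch or-node child at depth $2k+2$ carries an instantiated clause $\sigma(\theta_k(P(i_k)))$ where $\mathit{head}(P(i_k)) \prec_{\theta_k} s_k$; and its on-branch grandchild and-node at depth $2k+3$ is one of the body terms of that clause, which plays the role of $s_{k+1}$. This is exactly the data of a rewriting step $[s_k] \rightarrow_P [\ldots, s_{k+1}, \ldots]$. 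Conversely, any infinite chain $s_0, s_1, s_2, \ldots$ of one-step rewritings starting from $\sigma(t)$ can be threaded into the tree, since item 3(b) of Definition~\ref{def:CT} guarantees a non-variable or-node child at the relevant and-node whenever a matching clause exists.

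With the correspondence in hand, the forward direction is immediate: if $P$ is observationally productive then no infinite rewriting reduction from $[\sigma(t)]$ exists, so no infinite branch of $\kw{rew}(P, ?\gets t, \sigma)$ exists, so the tree is finite. For the backward direction I would argue by contrapositive: suppose $P$ admits an infinite rewriting reduction from some list $[t_1, \ldots, t_n]$. Since rewriting steps on descendants of distinct $t_k$'s are independent, by pigeonhole at least one $t_k$ gives rise to an infinite sub-reduction confined to its descendants; a second use of K\"onig's Lemma on the finitely-branching term-genealogy tree of that sub-reduction extracts an infinite chain of single-term rewritings starting from $t_k$. By the correspondence, $\kw{rew}(P, ?\gets t_k, \mathit{id})$ is then infinite, as required.

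The main obstacle is the bookkeeping in the correspondence: aligning the mgms $\theta_k$ read off the branch with the mgms chosen in the rewriting reduction, and reconciling them with the outer substitution $\sigma$ carried throughout the tree. The renaming-apart convention and Theorem~\ref{prop:sub-props} are the key tools here, ensuring that $\sigma$ commutes harmlessly with each $\theta_k$ on the disjoint variables involved. Once this is checked, both directions reduce to routine applications of K\"onig's Lemma.
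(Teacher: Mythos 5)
Your proof is correct, and it takes the only natural route here: the paper itself offers no explicit argument, stating the proposition as an immediate consequence of Definitions~\ref{def:prod} and~\ref{def:CT}, and your K\"onig's-Lemma argument (finitely branching by the arity constraints, infinite branches of $\kw{rew}(P,?\gets t,\sigma)$ corresponding exactly to infinite single-term rewriting chains from $\sigma(t)$, plus the pigeonhole reduction from lists to a single term) simply spells out the details the paper leaves implicit. The substitution bookkeeping you flag is indeed harmless, since $\sigma$ is idempotent and clause variables are renamed apart, so the extra application of $\sigma$ at or-node children does not affect matchability along a branch.
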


We can further establish a correspondence between certain subtrees of
rewriting trees and inductively successful S-resolution reductions.

\begin{definition}
A tree $T'$ is a \emph{rewriting subtree} of a rewriting tree $T$ if
$\mathit{dom}(T') \subseteq \mathit{dom}(T)$ and the following
properties hold:
\begin{enumerate}
\item $T'(\epsilon) = T(\epsilon)$.
\item If $w \in 
dom(T')$ with $|w|$ even, then $T'(w) = T(w)$,
$wi \in dom(T')$ for every $wi \in dom(T)$, and $T'(wi) = T(wi)$.
\item If $w \in 
dom(T')$ with $|w|$ odd, then $T'(w) = T(w)$, there exists a unique
$i$ with $wi \in dom(T)$ such that $wi \in dom(T')$, and $T'(wi) =
T(wi)$ for this $i$.
\end{enumerate}
\end{definition}   

\noindent
Rewriting subtrees can be either finite or infinite. Note that the
and-nodes in item 2 grow children by universal quantification, whereas
the or-nodes in item 3 grow them by existential quantification.

\begin{definition}\label{def:success/failure}
If $T \in \textbf{Rew}^\omega(P)$, then an or-node of $T$ is an {\em
  inductive success node} if it is a non-variable leaf node of $T$. If
$T'$ is a finite rewriting subtree of $T$ all of whose leaf nodes are
inductive success nodes of $T$, then $T'$ is an {\em inductive success
  subtree} of $T$. If $T$ contains an inductive success subtree then
we call $T$ an {\em inductive success tree}.
\end{definition}

The following proposition is immediate from Definitions~\ref{def:CT}
and~\ref{def:success/failure}:

\begin{proposition}\label{prop:RT-red}
If $P\in \cat{LP}(\Sigma)$ and $t\in \cat{Term}(\Sigma)$, then $P
\vdash t \rightarrow^n [\;]$ for some $n$ iff $\kw{rew}(P,?\gets t,
\mathit{id})$ is an inductive success tree.
\end{proposition}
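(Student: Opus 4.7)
My plan is to prove both directions by induction, exploiting the structural correspondence between one rewriting reduction step on a list and one ``layer'' of the rewriting tree (an and-node together with the or-node children selected from a matching clause).

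The key preliminary observation is that when $[t_1,\ldots,t_n] \rightarrow_P [t_1,\ldots,\sigma(B_0),\ldots,\sigma(B_m),\ldots,t_n]$ via matching $A \prec_\sigma t_i$ with clause $A \gets B_0,\ldots,B_m$, the matcher $\sigma$ acts only on the variables of the freshly renamed clause and so leaves $t_j$ for $j \neq i$ unchanged. Consequently, a reduction $[s_0,\ldots,s_k] \rightarrow^{n-1}_P [\,]$ decomposes into independent reductions $[s_j] \rightarrow^{n_j}_P [\,]$ with $\sum_j n_j = n-1$, and conversely these can be concatenated in any order. I also need the easy fact that the rewriting subtree of $\kw{rew}(P,? \gets t, \mathit{id})$ hanging below the and-node $s_j$ at depth $3$ coincides (up to variable renaming) with $\kw{rew}(P,? \gets s_j,\mathit{id})$; this is immediate from item 3(b) of Definition~\ref{def:CT} because the accumulated substitution parameter is still $\mathit{id}$ after a matching step.

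For the $(\Rightarrow)$ direction I proceed by induction on $n$ in $[t]\rightarrow^n_P [\,]$. The base case $n=1$ forces $t$ to match some empty-bodied clause $A \gets\,$; the inductive success subtree then consists of the root $? \gets t$, its unique and-node child $t$, and the or-node child of $t$ corresponding to that clause (a leaf with empty body, hence an inductive success node). For the inductive step, $[t]\rightarrow_P [s_0,\ldots,s_k] \rightarrow^{n-1}_P [\,]$ where $A \gets B_0,\ldots,B_k$ matches $t$ under $\sigma$. By the decomposition observation and the inductive hypothesis, each $\kw{rew}(P,? \gets s_j,\mathit{id})$ has an inductive success subtree $T'_j$. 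I assemble the desired subtree of $\kw{rew}(P,? \gets t,\mathit{id})$ by taking the root, its sole and-node child $t$, the or-node child of $t$ that is $\sigma(A \gets B_0,\ldots,B_k)$, and, beneath each child and-node $s_j$ of that or-node, the copy of $T'_j$ sitting inside $\kw{rew}(P,? \gets t,\mathit{id})$. The universal condition on or-nodes in the definition of rewriting subtree is satisfied because we include all $s_j$; the existential condition on and-nodes is satisfied because each of $t, s_0,\ldots$ selects exactly one clause.

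For the $(\Leftarrow)$ direction I do induction on the size (number of nodes) of the inductive success subtree $T'$ of $\kw{rew}(P,? \gets t,\mathit{id})$. In $T'$ the and-node $t$ chooses exactly one or-node child; that child is necessarily a non-variable clause (by item 3(b) of Definition~\ref{def:CT}), so it has the form $\sigma(A \gets B_0,\ldots,B_k)$ for some clause $A \gets B_0,\ldots,B_k$ of $P$ and matcher $A \prec_\sigma t$. If $k = -1$ (empty body) then $T'$ is minimal and $[t] \rightarrow_P [\,]$. Otherwise $T'$ restricts to a strictly smaller inductive success subtree beneath each $s_j = \sigma(B_j)$, which by the subtree-isomorphism observation is an inductive success subtree of $\kw{rew}(P,? \gets s_j,\mathit{id})$. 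The inductive hypothesis yields $[s_j]\rightarrow^{n_j}_P [\,]$ for each $j$, and concatenating these after the initial step gives $[t]\rightarrow_P [s_0,\ldots,s_k]\rightarrow^{n_0+\cdots+n_k}_P [\,]$.

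The bookkeeping above is straightforward; the only non-trivial ingredient is the subtree-isomorphism fact, which is where care with the third substitution parameter of $\kw{rew}$ pays off. Everything else is a direct reading-off between a single matching step and one and/or-node pair in the tree.
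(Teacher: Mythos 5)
Your proof is correct and is essentially what the paper intends: the paper offers no argument at all beyond declaring the proposition immediate from Definitions~\ref{def:CT} and~\ref{def:success/failure}, and your two inductions are exactly the straightforward unfolding of the one-rewriting-step-per-and/or-layer correspondence that claim appeals to. Two cosmetic slips, neither affecting the argument: the subtree of $\kw{rew}(P,?\gets t,\mathit{id})$ below the and-node $s_j$ coincides with $\kw{rew}(P,?\gets s_j,\mathit{id})$ \emph{minus its root or-node} $?\gets s_j$ (which has the depth-1 and-node $s_j$ as its only child, so success subtrees still transfer), and the fact that the chosen or-node child of $t$ is a non-variable clause follows from the success-subtree condition (variable or-nodes are always leaves, hence would be non-success leaves) rather than from item 3(b) alone.
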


With these preliminary results in hand we can now begin to show that
rewriting trees impose on S-resolution reductions precisely the
control required to prove their soundness and completeness with
respect to least Herbrand models. We first observe that:

\begin{theorem}\label{th:isc} 
Let $P \in \cat{LP}(\Sigma)$ and $t \in \mathbf{Term}(\Sigma)$.
\begin{itemize}
\item If $\kw{rew}(P,? \gets t, \sigma)$ is an inductive success tree
  for some $\sigma \in \cat{Subst}(\Sigma)$ then, for every instance
  $t' \in \cat{GTerm}(\Sigma)$ of $\sigma(t)$, $t' \in M_P$.
\item If $ t \in M_P$, then there exists a grounding substitution
  $\theta \in \cat{Subst}(\Sigma)$ such that $\kw{rew}(P,? \gets t,
  \theta)$ is an inductive success tree.
\end{itemize}
\end{theorem}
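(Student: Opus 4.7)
The plan is to prove the two directions by exhibiting a structural correspondence between inductive success subtrees of $\kw{rew}(P, ?\gets t, \sigma)$ and finite big-step derivations built from the rule of Definition~\ref{df:irules}. This correspondence is natural because Definition~\ref{def:success/failure} forces an inductive success subtree to retain all children at or-nodes (which stand for a clause and its body terms) and exactly one child at and-nodes (which stand for the choice of clause used to resolve a subgoal), exactly mirroring the shape of a big-step derivation tree that applies one big-step rule per subgoal and requires all body premises to hold.

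For inductive soundness, I would prove the following strengthened claim by induction on the depth of the finite inductive success subtree $T'$ rooted at an and-node $u = T'(w)$: for every grounding substitution $\delta$ such that $\delta(u) \in \cat{GTerm}(\Sigma)$, $\delta(u) \in M_P$. In the base case, the unique or-node child of $u$ in $T'$ is a leaf of the form $\sigma(\theta(P(i)))$ with $P(i)$ an empty-bodied clause and $\mathit{head}(P(i)) \prec_\theta u$, so that $\delta\theta$ grounds $\mathit{head}(P(i))$ to $\delta(u)$ and a premise-free forward application of the big-step rule for $P(i)$ gives $\delta(u) \in M_P$. In the inductive step, the selected or-node child $\sigma(\theta(P(i)))$ has body children $B_1, \ldots, B_m$ among the and-nodes of $T'$; extending $\delta$ to a $\delta'$ grounding any further body-only variables (possible as $\Sigma$ contains a constant), the induction hypothesis yields $\delta'(B_k) \in M_P$ for each $k$, and forward application of the big-step rule for $P(i)$ with grounding $\delta'\sigma\theta$ delivers $\delta(u) \in M_P$, using idempotence of $\sigma$ to identify $\sigma(u)$ with $u$. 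Applying the claim at the unique and-node $\sigma(t)$ at depth~$1$ of $T'$ completes this direction.

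For inductive completeness, I would proceed by induction on the (finite, well-founded) big-step derivation of $t \in M_P$. Since $t \in M_P$, $t$ is ground and every subgoal arising in the derivation is ground, so matching at each and-node of the rewriting tree against the clause head used at the corresponding derivation node always succeeds. At each derivation node a clause $P(i)$ and a grounding substitution $\rho$ are chosen so that $\rho(\mathit{head}(P(i)))$ equals the current subgoal and each premise $\rho(P(i)(k))$ lies in $M_P$. I would define a single global $\theta$ by mapping each freshly-renamed body-only variable introduced during the rewriting tree construction at a particular derivation node to the value assigned by the corresponding $\rho$; because the renaming-apart convention guarantees that fresh variables from distinct nodes of the construction are pairwise disjoint, this specification is consistent and yields a coherent grounding substitution. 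A direct check using Theorem~\ref{prop:sub-props} and Definition~\ref{def:CT} then shows that selecting, at each and-node of $\kw{rew}(P, ?\gets t, \theta)$, the or-node child whose clause coincides with the one used at the matching derivation node produces a finite rewriting subtree all of whose leaves are non-variable empty-bodied clauses, i.e., an inductive success subtree.

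The main obstacle is the bookkeeping that reconciles the single global substitution passed to $\kw{rew}$ with the local matchers computed at each and-node, especially in the completeness direction, where one must encode all per-level groundings from a possibly tall derivation into one $\theta$. Theorem~\ref{prop:sub-props}, which guarantees that substitution commutes with the $\kw{rew}$ construction, together with the renaming-apart convention that keeps the body-only variables introduced at different derivation nodes disjoint, is exactly what makes this encoding coherent and the correspondence between inductive success subtrees and big-step derivations a bijection.
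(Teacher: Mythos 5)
Your proposal is correct and takes essentially the same route as the paper, whose entire proof is the one-line remark that it proceeds by induction on the depth of rewriting trees: your soundness half is exactly that induction (on the depth of the inductive success subtree, reading each or-node as a forward application of the big-step rule), and your completeness half supplies the dual induction on the finite big-step derivation of $t \in M_P$, which is the natural way to fill in the paper's sketch. The bookkeeping you single out---reconciling the single global $\theta$ with the local matchers via Theorem~\ref{prop:sub-props} and the renaming-apart convention---is handled soundly and is consistent with how the paper itself uses these tools (cf.\ the $\kw{rew}(P_6,?\gets \mathtt{conn(a,c)},\{\mathtt{Z}\mapsto\mathtt{b}\})$ example).
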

\begin{proof}
The proof is by induction on the depth of rewriting trees.
\end{proof}

\begin{example}
The term $\mathtt{conn(a,c)}$ is in $M_{P_6}$. The tree
$\kw{rew}(P_6,?\gets \mathtt{conn(a,c)}, \mathit{id})$ is not an
inductive success tree, as Figure~\ref{fig:conn} shows. However,
$\kw{rew}(P_6,?\gets \mathtt{conn(a,c)}, \theta)$, for $\theta =
\{\mathtt{Z}\mapsto \mathtt{b}\}$, is indeed an inductive success
tree. This accords with Theorem~\ref{th:isc}.
\end{example}

\begin{figure*}
\begin{center}
 \begin{tikzpicture}[level distance=10mm,sibling
     distance=50mm,scale=.6,font=\scriptsize,baseline=(current
     bounding box.north),grow=down] 
\hspace*{-0.3in} \node {? $\gets \mathtt{conn(a,c)} \hspace*{0.5in}
  \rightarrow_{X_3}
    \hspace*{-0.5in}$}
		child[sibling distance=20mm]{node
                  {$\mathtt{conn(a,c)}$} 
		child[sibling distance=40mm] {node
                  {$\mathtt{conn(a,c)} \gets
                    \mathtt{conn(a,Z),conn(Z,c)}$} 
	child[sibling distance=30mm]{node {$\mathtt{conn(a,Z)}$}
			child[sibling distance=10mm]{node {$\vdots$}}
				child[sibling distance=10mm]{node {$X_3$}}	
			   child[sibling distance=10mm]{node {$X_4$}}
			}
	child[sibling distance=30mm]{node {$\mathtt{conn(Z,c)}$}
				child[sibling distance=10mm]{node {$\vdots$}}	
			   child[sibling distance=10mm]{node {$X_5$}}
	child[sibling distance=10mm]{node {$X_{6}$}}
			}}
		child[sibling distance=20mm]{node {$X_1$}}
		child[sibling distance=10mm]{node {$X_2$}}};  
  \end{tikzpicture}
	 \begin{tikzpicture}[level distance=10mm,sibling distance=50mm,scale=.6,font=\scriptsize,baseline=(current bounding box.north),grow=down ]
  \node {\underline{? $\gets \mathtt{conn(a,c)}$}}
		child[sibling distance=20mm]{node
                  {\underline{$\mathtt{conn(a,c)}$}} 
		child[sibling distance=40mm] {node
                  {\underline{$\mathtt{conn(a,c)} \gets
                    \mathtt{conn(a,b),conn(b,c)}$}} 
	child[sibling distance=40mm]{node {\underline{$\mathtt{conn(a,b)}$}}
					child[sibling distance=20mm]{node {$\vdots$}}	
			   child[sibling distance=20mm]{node {\underline{$\mathtt{conn(a,b)} \gets$}}}
	child[sibling distance=20mm]{node {$X_4$}}
			}
	child[sibling distance=40mm]{node {\underline{$\mathtt{conn(b,c)}$}}
				child[sibling distance=10mm]{node {$\vdots$}}	
			   child[sibling distance=10mm]{node {$X_5$}}
	child[sibling distance=20mm]{node {\underline{$\mathtt{conn(b,c)} \gets$}}}
			}}
		child[sibling distance=20mm]{node {$X_1$}}
		child[sibling distance=10mm]{node {$X_2$}}};  
  \end{tikzpicture}
\end{center}
\caption{\footnotesize{An S-refutation for the program $P_6$ of
    Example~\ref{ex:conn} and $\mathtt{conn(a,c)}$. The left tree
    $\kw{rew}(P_6,?\gets \mathtt{conn(a,c)}, \mathit{id})$ is not an
    inductive success tree. However, the right tree
    $\kw{rew}(P_6,?\gets \mathtt{conn(a,c)},
    \{\mathtt{Z}\mapsto\mathtt{b}\})$ is. The inductive success
    subtree of the right tree is underlined. }}
\label{fig:conn}
\end{figure*}

\subsection{Modeling $\hookrightarrow$ by Transitions
Between Rewriting Trees} 

Next we define transitions between rewriting trees. Such transitions
are defined by the familiar notion of a resolvent, and assume a
suitable algorithm for renaming ``free" clause variables
apart~\cite{JohannKK15}. Let $P \in \textbf{LP}(\Sigma)$ and $t\in
\cat{Term}(\Sigma)$. If $\mathit{head}(P(i)) \sim_\theta t$, then
$\theta$ is called the \emph{resolvent} of $P(i)$ and $t$. If no such
$\theta$ exists then $P(i)$ and $t$ have {\em null resolvent}. A
non-null resolvent is an \emph{internal resolvent} if it $\mathit{head}(P(i))
\prec_{\theta} t$ and an {\em external resolvent} otherwise.

\begin{definition}\label{def:resapp}
 Let $T = \kw{rew}(P,C,\sigma) \in \textbf{Rew}^\omega(P)$. If $X =
 T(wi) \in V_R$, then the rewriting tree $T_{X}$ is defined as
 follows: If the external resolvent $\theta$ for $P(i)$ and $T(w)$ is
 null, then $T_X$ is the empty tree. If $\theta$ is non-null, then
 $T_X = \kw{rew}(P, C, \theta \sigma)$.
\end{definition}

If $X \in V_R$, we denote the computation of $T_X$ from $T \in
\mathbf{Rew}^\omega(\Sigma)$ by $T \rightarrow T_X$. The operation $T
\rightarrow T_X$ is a \emph{tree transition} for $P$ and $C$;
specifically, we call the tree transition $T \rightarrow T_X$ the tree
transition for $T$ with respect to $X$.  A {\em tree transition} for
$P \in \mathbf{LP}(\Sigma)$ is a tree transition for $P$ and some $C
\in \mathbf{Clause}(\Sigma)$.  If $T \rightarrow T_X$ is a tree
transition and if $X = T(w)$, then we say that both the node $T(w)$
and the branch of $T$ that this node lies on are {\em expanded} in
this transition. A (finite or infinite) sequence $T_0 = \kw{rew}(P,?
\gets t,\mathit{id}) \rightarrow T_1 \rightarrow T_2 \rightarrow
\ldots$ of tree transitions for $P$ is a {\em structural tree
  resolution derivation}, or simply an \emph{S-derivation} for short,
for $P$ and $t$. An S-derivation for $P$ and $t$ is said to be an {\em
  S-refutation}, or an {\em inductive proof}, for $t$ with respect to
$P$, if it is of the form $T_0 \rightarrow T_1 \rightarrow
... \rightarrow T_n$ for some $n$, where $T_n$ is an inductive success
tree.  Figure~\ref{fig:fibs2} shows an initial fragment of an infinite
S-derivation for the program $P_3$ and $\mathtt{fibs(0,s(0),X)}$.  The
derivations shown in Figures~\ref{fig:overlap} and \ref{fig:conn} are
inductive proofs for $P_7$ and $\mathtt{p(c)}$, and for $P_6$ and
$\mathtt{conn(a,c)}$, respectively. Note that the final trees of
Figures~\ref{fig:overlap} and \ref{fig:conn} show nodes corresponding
to (finite) inductively successful S-reductions for $P_7$ and
$\mathtt{p(c)}$, and for $P_6$ and $\mathtt{conn(a,c)}$, respectively,
underlined.

If each $\theta_i$ is the external resolvent associated with the tree
transition $T_{i-1} \rightarrow T_i$ in an S-derivation $T_0 =
\kw{rew}(P, ? \gets t, \mathit{id}) \rightarrow T_1 \rightarrow
... \rightarrow T_n$, then $\theta_1,...,\theta_n$ is the {\em
  sequence of resolvents associated with} that S-derivation. In this
case, each tree $T_i$ in the S-derivation is given by $\kw{rew}(P,\, ?
\gets t, \,\theta_i\ldots \theta_2 \theta_1)$.  Note how the third
parameter composes the mgus.

\begin{example}
The S-derivation in Figure~\ref{fig:fibs2} starts with $\kw{rew}(P_3,?
\gets \mathtt{fibs(0,}$ $\mathtt{s(0),X)}, id)$.  Its second tree can
be seen as $\kw{rew}(P_3,? \gets \mathtt{fibs(0,s(0),X)}, \theta_1)$,
where $\theta_1 = \{\mathtt{X} \mapsto \mathtt{cons(0,S)}\}$, and its
third tree as $\kw{rew}(P_3,?  \gets \mathtt{fibs(0,s(0), X)},
\theta_2\theta_1)$, where $\theta_2 = \{\mathtt{Z} \mapsto
\mathtt{s(0)}\}$. Here, $\theta_1$ and $\theta_2$ are the resolvents
for the tree transitions for the first and the second trees with
respect to $X_3$ and $X_4$, respectively.
\end{example}

We have just formally rendered the formula \emph{Structural Logic
  Programming = S-Resolution Reductions + Control}: we embedded proof
search choices and or-node variable substitutions into S-resolution
reductions via rewriting trees, thus obtaining the notion of an
S-derivation and the inductive proof methodology we call {\em
  structural resolution}, or {\em S-resolution} for short.  It now
remains to exploit the inductive and coinductive properties of our new
theory of S-resolution.

\subsection{Inductive Soundness and Completeness of S-Resolution}

Before exploiting the coinductive properties of S-resolution we
investigate its inductive properties. Some S-derivations for a program
$P$ and a term $t$ may be S-refutations and some not, but termination
of one S-derivation in other than an inductive success tree does not
mean no S-refutation exists for $P$ and $t$. This reflects the facts
that inductive success is an existential property, and that entailment
for Horn clauses is only semi-decidable.  In this section we present
our inductive soundness and completeness results for S-resolution. We
note that these do not require logic programs to be either
observationally productive or non-overlapping.

\begin{example}\label{ex:succ}
An S-derivation for the program $P_6$ and $\mathtt{conn(a,c)}$ is
shown in Figure~\ref{fig:conn}.  The program $P_6$ is not
observationally productive. An inductive success subtree of the
derivation's final tree is indicated by underlining. It contains the
inductive success nodes labelled $\mathtt{conn(a,b)} \gets\;$ and
$\mathtt{conn(b,c)} \gets\;.$ Since its final tree is an inductive
success tree, this S-derivation is an S-refutation for $P_6$ and
$\mathtt{conn(a,c)}$.
\end{example}

\begin{example}
An S-refutation for the overlapping program $P_7$ and $\mathtt{p(c)}$
is shown in Figure~\ref{fig:overlap}. An inductive success subtree of
the derivation's final tree is indicated by underlining.
\end{example}

Inductive soundness and completeness of S-resolution are simple
corollaries of Theorem~\ref{th:isc}:

\begin{theorem}\label{thm:sc}
Let $P \in \cat{LP}(\Sigma)$ and $t \in \mathbf{Term}(\Sigma)$.
\begin{itemize}
\item (Inductive soundness of S-resolution) If there is an
  S-refutation for $P$ and $t$ that computes answer $\theta$, and $t'$ is  a ground instance 
	of $\theta(t)$, then $t' \in \mathbf{GTerm}(\Sigma)$.
\item (Inductive completeness of S-resolution) If $t \in M_P$, then
  there exists a term $t' \in \mathbf{Term}(\Sigma)$ that yields an
  S-refutation for $P$ and $t'$ that computes answer $\theta \in
  \cat{Subst}(\Sigma)$ such that $t$ is an instance of $\theta(t')$.
\end{itemize}
\end{theorem}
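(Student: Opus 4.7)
The plan is to derive both clauses as corollaries of Theorem~\ref{th:isc} together with the substitution property of rewriting trees given by Theorem~\ref{prop:sub-props}. The key bridge is the observation, recorded just after Definition~\ref{def:resapp}, that each tree $T_i$ in an S-derivation $T_0 = \kw{rew}(P,\, ?\gets t,\, \mathit{id}) \to T_1 \to \cdots \to T_n$ with associated resolvents $\theta_1, \ldots, \theta_n$ has the concrete form $T_i = \kw{rew}(P,\, ?\gets t,\, \theta_i \cdots \theta_1)$. Hence reasoning about S-refutations reduces to reasoning about rewriting trees of the shape $\kw{rew}(P,\, ?\gets t,\, \sigma)$, which is exactly what Theorem~\ref{th:isc} speaks about.

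For inductive soundness, I take an S-refutation $T_0 \to \cdots \to T_n$ with computed answer $\theta = \theta_n \cdots \theta_1$. By the remark above, $T_n = \kw{rew}(P,\, ?\gets t,\, \theta)$, and by definition of S-refutation $T_n$ is an inductive success tree. The first clause of Theorem~\ref{th:isc} (with $\sigma := \theta$) then says that every ground instance $t'$ of $\theta(t)$ lies in $M_P$, which is the statement we want (reading the conclusion as $t' \in M_P$, in line with Theorems~\ref{thm:cs} and~\ref{th:isc}).

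For inductive completeness, I take $t' = t$ (using $\alpha$-renaming if necessary so variables of $t$ are disjoint from those introduced during the derivation). By the second clause of Theorem~\ref{th:isc}, there exists a grounding substitution $\theta^* \in \cat{Subst}(\Sigma)$ such that $\kw{rew}(P,\, ?\gets t,\, \theta^*)$ is an inductive success tree; fix a finite inductive success subtree $T^*$ witnessing this. I then build the S-derivation from $T_0 = \kw{rew}(P,\, ?\gets t,\, \mathit{id})$ by walking along $T^*$: at each or-node variable $X = T_0(wi)$ that lies in the projection of $T^*$, the corresponding clause $P(i)$ must have a non-null \emph{external} resolvent $\theta_i$ with the current node label, because it has one under the $\theta^*$-instance. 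Using Theorem~\ref{prop:sub-props}, each transition updates the third parameter of the tree by pre-composition with $\theta_i$, so the composition $\theta := \theta_k \cdots \theta_1$ along this walk satisfies the relationship $\theta^* = \rho \, \theta$ for some $\rho$, hence $\theta^*(t) = \rho(\theta(t))$ and $t$ (more precisely its ground instance in $M_P$) is an instance of $\theta(t')$ as required. Since $T^*$ is finite, this walk terminates in an inductive success tree, yielding an S-refutation.

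The main obstacle is the completeness direction, and specifically the lifting argument embedded in it: showing that the ground inductive success tree $\kw{rew}(P,\, ?\gets t,\, \theta^*)$ can be \emph{lifted} to an S-derivation whose sequence of most general external resolvents composes to something more general than $\theta^*$. This is the S-resolution analogue of the classical lifting lemma for SLD-resolution; the work is in verifying, transition by transition, that each external resolvent required for the walk exists and is more general than the relevant restriction of $\theta^*$, and that the resulting composition lies below $\theta^*$. All other steps — in particular the soundness direction and the conversion of a finite inductive success subtree into a finite sequence of tree transitions — are immediate bookkeeping on top of Theorems~\ref{th:isc} and~\ref{prop:sub-props}.
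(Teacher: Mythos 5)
Your proposal follows the same route as the paper, which states Theorem~\ref{thm:sc} simply as a corollary of Theorem~\ref{th:isc}: the soundness clause is read off from the first bullet of Theorem~\ref{th:isc} via the observation that the final tree of an S-refutation is $\kw{rew}(P,\,?\gets t,\,\theta_n\cdots\theta_1)$, and the completeness clause from the second bullet. Your additional lifting argument---walking the finite inductive success subtree of $\kw{rew}(P,\,?\gets t,\,\theta^*)$ and realising it by tree transitions with external resolvents, using Theorem~\ref{prop:sub-props}---correctly fills in the step the paper leaves implicit (also noting, as you do, that the conclusion of the soundness clause should read $t'\in M_P$), so the proposal is sound and essentially identical in approach.
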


\noindent
We also have the following corollary of Theorem~\ref{th:isc}:

\begin{corollary}\label{cor:univ}
Let $P \in \cat{LP}(\Sigma)$ and $t \in \mathbf{Term}(\Sigma)$. If
there is an S-refutation $T_0 = \kw{rew}(P,?\gets t, \mathit{id})
\rightarrow T_1 \rightarrow \ldots \rightarrow T_n$ with associated
external resolvents $\sigma_1, \ldots, \sigma_n$ then, for all
grounding substitutions $\theta \in \cat{Subst}(\Sigma)$ for $\sigma_n
\ldots \sigma_1(t)$, $\theta \sigma_n \ldots \sigma_1(t)~\in~M_P$.
\end{corollary}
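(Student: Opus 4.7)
The plan is to derive the corollary directly from the first bullet of Theorem~\ref{th:isc} by identifying the composed sequence of external resolvents with the third parameter of the rewriting tree construction $\kw{rew}$. The only real work is tracking how the parameters compose along an S-derivation; this is already spelled out in the discussion preceding the corollary, so the corollary should reduce to a short unwinding of definitions.

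First, I would recall the observation (stated explicitly between Definitions~\ref{def:resapp} and Example on $P_3$) that in an S-derivation $T_0 = \kw{rew}(P,?\gets t,\mathit{id}) \rightarrow T_1 \rightarrow \cdots \rightarrow T_n$ with associated external resolvents $\sigma_1,\ldots,\sigma_n$, each intermediate tree satisfies $T_i = \kw{rew}(P,?\gets t, \sigma_i\cdots\sigma_1)$. Although this identity is stated without a separate proof, it is an immediate consequence of Definition~\ref{def:resapp} together with Theorem~\ref{prop:sub-props} (the substitution theorem for rewriting trees): each tree transition amounts to rebuilding $\kw{rew}$ with its third parameter pre-composed with the new external resolvent. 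If any rigour is desired here, one short induction on $i$ suffices.

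Next, since the given S-derivation is an S-refutation, by definition its terminal tree $T_n$ is an inductive success tree. Combining this with the identity above gives that $\kw{rew}(P,?\gets t,\sigma_n\cdots\sigma_1)$ is an inductive success tree. Now I apply the first bullet of Theorem~\ref{th:isc} with $\sigma := \sigma_n\cdots\sigma_1$: every ground instance $t' \in \cat{GTerm}(\Sigma)$ of $\sigma(t) = \sigma_n\cdots\sigma_1(t)$ satisfies $t' \in M_P$.

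Finally, I would conclude by observing that for any grounding substitution $\theta \in \cat{Subst}(\Sigma)$ for $\sigma_n\cdots\sigma_1(t)$, the term $\theta\sigma_n\cdots\sigma_1(t)$ is by construction a ground instance of $\sigma_n\cdots\sigma_1(t)$, so it lies in $M_P$, which is the required conclusion. There is essentially no obstacle; the only small point to verify is the composition identity for the third parameter of $\kw{rew}$ along the S-derivation, and that is immediate from Theorem~\ref{prop:sub-props} and Definition~\ref{def:resapp}. Everything else is a direct instantiation of the already-proven Theorem~\ref{th:isc}.
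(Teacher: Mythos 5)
Your proposal is correct and matches the paper's (implicit) argument: the paper states Corollary~\ref{cor:univ} as an immediate consequence of Theorem~\ref{th:isc}, relying on exactly the identity you invoke, namely that the final tree of the S-refutation is $\kw{rew}(P,\,?\gets t,\,\sigma_n\cdots\sigma_1)$, which is noted just before the corollary and follows from Definition~\ref{def:resapp} and Theorem~\ref{prop:sub-props}. Your unwinding via the first bullet of Theorem~\ref{th:isc} with $\sigma=\sigma_n\cdots\sigma_1$ is precisely the intended reasoning.
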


For an S-refutation $\kw{rew}(P,?\gets t, \mathit{id}) \rightarrow T_1
\rightarrow \ldots \rightarrow T_n$ with associated external
resolvents $\sigma_1, \ldots, \sigma_n$, the rewriting tree $T_n =
\kw{rew}(P, \ ?\gets t, \sigma_n \ldots \sigma_1)$ can be regarded as
a proof witness constructed for the query $t$.

The correspondence between the soundness and completeness of
S-refutations and the classical theorems of LP captures the
(existential) property of inductive success in S-resolution
reductions. Our results do not, however, mention failure, which is a
universal (and thus more computationally expensive) property to
establish.  Theorems~\ref{th:isc} and~\ref{thm:sc} also show that
rewriting trees can distinguish derivations proving logical entailment
existentially --- i.e., for some (ground) instances only --- from
those proving it universally --- i.e., for all (ground)
instances. Indeed, Theorems~\ref{th:isc} and Theorem~\ref{thm:sc} show
that proof search by unification has existential properties. 

\begin{example}\label{ex:conn2}
Since $\kw{rew}(P_1, ? \gets \mathtt{nat(X)}, \mathit{id})$ is not an
inductive success tree, $P_1$ does not logically entail the
universally quantified formula $\forall X.\,
\mathtt{nat}(X)$. Similarly, since $\kw{rew}(P_6, ?  \gets
\mathtt{conn(X,Y)}, \mathit{id})$ is not an inductive success tree,
$P_6$ does not logically entail $\forall X,Y.\,
\mathtt{conn}(X,Y)$. On the other hand, if we added a clause
$\mathtt{conn(X,X)} \gets$ to $P_6$, then, for resulting program
$P_6'$, $\kw{rew}(P'_6, ? \gets \mathtt{conn(X,X)}, \mathit{id})$
would be an inductive success tree, and we would be able to infer that
$P'_6$ does indeed logically entail $\forall X.\, \mathtt{conn}(X,X)$.
\end{example}

Throughout this section, finiteness of inductive success subtrees (and
thus of their corresponding rewriting reductions and S-derivations)
has served as a precondition for our inductive soundness and
completeness results. In the next section we restore the broken
symmetry by defining coinductive proof methods that require
observational productivity of S-derivations as a precondition of
coinductive soundness.
\section{Coinductive Soundness of S-Resolution}\label{sec:models} 

In this section, we show that S-resolution can capture not just
inductive declarative and operational semantics of LP, but coinductive
 semantics as well. We start by defining greatest complete
Herbrand models of logic programs, following~\cite{Llo87} closely,
then proceed by defining a notion of S-computations at infinity, and
conclude with a soundness theorem relating the two. We take time to
compare the computational properties of SLD-computations at infinity
and S-computations at infinity, and prove that the latter extends the
former. Since this section develops the theory of S-resolution for
coinductive LP, observational productivity is a necessary precondition
for establishing its results.

A first attempt to give an operational semantics corresponding to
greatest complete Herbrand models of logic programs was captured by
the notion of a \emph{computation at infinity} for SLD-resolution
\cite{EmdenA85,Llo87}. Computations at infinity are usually given
relative to an ultrametric on terms, constructed as follows:

\begin{definition}
A {\em truncation} for a signature $\Sigma$ is a mapping $\gamma':
\Nat \times \mathbf{Term}^\omega(\Sigma) \rightarrow
\mathbf{Term}(\Sigma \cup \diamond)$, where $\diamond$ is a new
nullary symbol not in $\Sigma$, and, for all $t \in
\mathbf{Term}^\omega(\Sigma)$ and $n \in \Nat$, the following
conditions hold:
\begin{itemize}
 \item $\mathit{dom}(\gamma'(n,t)) = \{m \in \mathit{dom}(t) \;| \ |m|
   \leq n \}$,
\item $\gamma'(n,t) = t(m)$ if $|m| < n$, and 
\item $\gamma'(n,t) = \diamond$ if $|m| = n$. 
\end{itemize}
\noindent	
For $t,s \in \mathbf{Term}^\omega(\Sigma)$, we define $\gamma(s,t) =
min\{n \;| \ \gamma'(n,s) \neq \gamma'(n,t)\}$, so that $\gamma(s,t)$
is the least depth at which $t$ and $s$ differ.  If we further define
$d(s,t) = 0$ if $s = t$ and $d(s,t) = 2^{-\gamma(s,t)}$ otherwise,
then $(\mathbf{Term}^\omega(\Sigma), d)$ is an ultrametric space.
\end{definition}

The definition of SLD-computable at infinity relative to a given
ultrametric is taken directly from~\cite{Llo87}:

\begin{definition}\label{def:SLD-comp-inf}
An SLD-resolution reduction is {\em fair} if either it is finite, or
it is infinite and, for every atom $B$ appearing in some goal in the
SLD-derivation, (a further instantiated version of) $B$ is chosen
within a finite number of steps. The term $t \in
\cat{GTerm}^\infty(\Sigma)$ is \emph{SLD-computable at infinity} with
respect to a program $P \in \cat{LP}(\Sigma)$ if there exist a $t' \in
\cat{Term}(\Sigma)$ and an infinite fair SLD-resolution reduction $G_0
= t', G_1, G_2, \ldots G_k \ldots$ with mgus $\theta_1, \theta_2,
\ldots \theta_k \ldots$ such that $d(t, \theta_k\ldots \theta_1(t'))
\rightarrow 0$ as $k \rightarrow \infty$. If such a $t'$ exists, we
say that $t$ is SLD-computable at infinity by $t'$.
\end{definition}

The fairness requirement ensures that infinite SLD-resolution
reductions that infinitely resolve against some subgoals while
completely ignoring others do not satisfy the definition of
SLD-computable at infinity. For example, $\mathtt{from (0,
  [0,[s(0),[s(s(0)), \ldots]]])}$ is not SLD-computable at infinity by
$P_5$ because no computation that infinitely resolves with subgoals
involving only $\mathtt{from}$ is fair.

In this section we see that \emph{SLD-Computations at Infinity =
  Global Productivity + Control}. Here, ``global productivity'' (as
opposed to observational productivity) requires that each fair
infinite SLD-resolution reduction for a program computes an infinite
term at infinity. The ``control" component determines the proof search
strategy for SLD-computations at infinity to be constrained by
fairness. We will see other variations on Kowalski's formula below.

Letting $P \in \mathbf{LP}(\Sigma)$ and defining $C_P = \{ t \in
\mathbf{GTerm}^\infty(\Sigma)\, |\, t$ is SLD-computable at infinity
with respect to $P$ by some $t' \in \cat{Term}(\Sigma)\}$, we have
that $C_P \subseteq M^{\omega}_P$ (\cite{EmdenA85,Llo87}).

\subsection{S-Computations at Infinity}

We can define a notion of computation at infinity for S-resolution to
serve as an analogue of Definition~\ref{def:SLD-comp-inf} for
SLD-resolution.  As a method of ``control" appropriate to
S-resolution, we introduce light typing for signatures, similar to
that in~\cite{GuptaBMSM07,SimonBMG07}. We introduce two types ---
namely, inductive and coinductive --- together with, for any signature
$\Sigma$, a {\em typing function} $\mathit{Ty}: \Sigma \rightarrow
\tau$ for $\Sigma$ that marks each symbol in $\Sigma$ as one or the
other. We adopt the convention that any symbol not explicitly marked
as coinductive is taken to be marked as inductive by default.  We note
that in SLD-computations at infinity all symbols are implicitly marked
as coinductive.

We extend the typing as inductive or coinductive from symbols to terms
and to nodes of rewriting trees. A term $t \in \mathbf{Term}(\Sigma)$
is inductive or coinductive according as $t(\epsilon)$ is.  If $P \in
\mathbf{LP}(\Sigma)$ and $T \in \mathbf{Rew}^\omega(P)$, then an
and-node $T(w)$ is coinductive if $T(w)(\epsilon)$ is coinductive, and
is inductive otherwise; an or-node in $T(w)$ is coinductive or
inductive according as its parent node is. A variable or-node $T(w) =
X$ is {\em open} if there exists a tree transition $T \rightarrow
T_X$, and is {\em closed} otherwise. A variable or-node is {\em
  coinductively open} if it is open and coinductive.  If $T'$ is a
rewriting subtree of $T$, then $T'$ is \emph{coinductively open} if it
contains coinductively open nodes, and is \emph{inductively closed} if
all of its open nodes are coinductive.

S-computations at infinity focus on observationally productive
programs and rely on properties of lightly typed rewriting trees. We
have:

\begin{definition}\label{df:infty}
Let $P \in \mathbf{LP}(\Sigma)$ be observationally productive, let
$Ty$ be a typing function for $\Sigma$, and let $t \in
\mathbf{GTerm}^\infty(\Sigma)$.  We say that $t' \in
\mathbf{Term}(\Sigma)$ {\em finitely approximates $t$ with respect to
  $P$ and $Ty$}, or is a {\em finite approximation of $t$ with respect
  to $P$ and $Ty$}, if the following hold:
\begin{enumerate}
\item There is an infinite S-derivation $T_0 = \kw{rew}(P,?  \gets t',
  \mathit{id}) \rightarrow T_1 \rightarrow \ldots T_k \rightarrow
  \ldots$ with associated resolvents $\theta_1, \theta_2, \dots
  \theta_k \ldots$ such that $d(t, \theta_k...\theta_1(t'))
  \rightarrow 0$ as $k \rightarrow \infty$.
\item This derivation contains infinitely many trees $T_{i_1},
  T_{i_2}, \ldots$ with an infinite sequence of corresponding
  rewriting subtrees $T'_{i_1}, T'_{i_2}, \ldots$ such that 
\begin{itemize}
\item[i)] each $T'_{i_j}$ is inductively closed and coinductively
  open
\item[ii)] each coinductive variable node is open and, for each such
  node $T'_{i_j}(w)$ in each $T'_{i_j}$, there exists $m > j$ such
  that $T'_{i_m}(wv)$ is coinductively open for some $v$.
\end{itemize}
\end{enumerate}
\noindent
Then $t$ is \emph{S-computable at infinity with respect to $P$ and
  $Ty$} if there is a $t' \in \mathbf{Term}(\Sigma)$ such that $t'$
finitely approximates $t$ with respect to $P$ and $Ty$. We define
$S^{Ty}_P = \{ t \in \mathbf{GTerm}^\infty(\Sigma) \, | \, $ $t$ is
S-computable at infinity with respect to $P$ and $Ty\}$.
\end{definition}

Here we see that \emph{S-Computations at Infinity = Global
  Productivity of S-Derivations + Control}. The first condition in
Definition~\ref{df:infty} ensures ``global productivity'' and the
second is concerned with ``control''. But Definition~\ref{df:infty}'s
requirement that programs are observationally productive is also used
to control S-derivations via observations.  We will see below that, as
the ``control" component becomes increasingly sophisticated, it can
capture richer cases of coinductive entailment than ever before.

\begin{example}\label{ex:cinf}
Consider $P_3$ and let $Ty$ be the type function marking (only) the
predicate $\mathtt{fibs}$ as coinductive. If $t' =
\mathtt{fibs(0,s(0),X)}$, then $t'$ finitely approximates, with
respect to $P_3$ and $Ty$, the infinite ground term $t^*$ from
Example~\ref{ex:fs} representing the stream of Fibonacci
numbers. Thus $t^*$ is S-computable at infinity with respect to $P_3$
and $Ty$. Figure~\ref{fig:fibs2} shows an initial fragment of the
S-derivation witnessing this. The infinite term $t^*$ is also
SLD-computable at infinity with respect to $P_3$.
\end{example}

Each of the ``control" requirements i, ii, and iii in
Definition~\ref{df:infty} is crucial to the correct formulation of a
notion of a finite approximation for S-resolution, and thus to the
notion of S-computability at infinity.  For Condition i, we note that
some S-derivations expand inductive nodes infinitely, which can block
the expansion of coinductive nodes. We do not want such S-derivations
to be valid finite approximations. For example, we want
$\mathtt{nats(scons(0,scons(0,\ldots)))}$ to be S-computable at
infinity with respect to $P_2$ if $\mathtt{nats}$ is marked
coinductive and $\mathtt{nat}$ is marked inductive, but we do not want
$\mathtt{nats(scons(s(s \ldots)),Y))}$ to be so computable.  Condition
i ensures that only S-derivations that infinitely expand only
coinductive nodes are valid finite approximations.

For Condition ii, we note that some S-derivations may have
unsuccessful inductive subderivations. We do not want these to be
valid finite approximations. For example, $P_5$ admits such
derivations.  Condition ii ensures that only S-derivations with
successful inductive subderivations are valid finite approximations.

For Condition iii, we note that even within one rewriting subtree
there may be several choices of coinductive nodes to expand in a
S-derivation. We want all such nodes to be infinitely expanded in a
valid finite approximation. For example, if $P_8$ comprises the
clauses of $P_2$ and $P_3$ with $\mathtt{fibs}$ and $\mathtt{nats}$
marked coinductive, together with $\mathtt{fibnats(X,Y)} \; \gets \;
\mathtt{fibs(0,s(0),X)}, \mathtt{nats(Y)}$, then S-derivations that
infinitely expand $\mathtt{fibs}$ but only finitely expand
$\mathtt{nats}$ compute at infinity terms of the form
$\mathtt{fibnats(cons(0,(cons(s(0),}$ $\mathtt{\ldots ))),
  scons}\mathtt{(}t_1,t_2\mathtt{))}$, for some finite terms $t_1$ and
$t_2$.  Since these computations do not expose the coinductive nature
of $\mathtt{nats}$, we do not want these to be valid finite
approximations. But we do want S-derivations that compute terms of the
form $\mathtt{fibnats(cons(0,(cons(s(0),\ldots)), scons(\_,scons(\_,}$
$\mathtt{\ldots)))}$ to be valid finite approximations.  Condition iii
ensures that only S-derivations infinitely expanding all coinductive
nodes are valid finite approximations.

\subsection{Soundness of S-Computations at Infinity}

We now investigate the relationship between SLD- and S-computations at
infinity. The next two examples show that, for a given $P \in
\mathbf{LP}(\Sigma)$ and a typing function $Ty$ for $\Sigma$, $C_P \,\subseteq \, S^{Ty}_P\,$ needs not hold.


\begin{example}
To see that $C_P \,\subseteq \,S^{Ty}_P$ needs not hold, we first note
that the infinite term $t = \mathtt{nat(s(s(\ldots)))}$ is
SLD-computable at infinity with respect to $P_1$ by $\mathtt{nat(X)}$,
and is thus in $C_{P_1}$.  But if $Ty$ marks $\mathtt{nat}$ as
inductive, then $t \not \in S^{Ty}_{P_1}$.  Similarly, in the mixed
inductive-coinductive setting we have that $t' =
\mathtt{nats(scons(s(s(\ldots)),scons(0, scons(s(0), \ldots)))}$ is
SLD-computable at infinity with respect to $P_2$ by
$\mathtt{nats(X)}$, and is thus in $C_{P_2}$. But if $Ty'$ is the
typing function that marks only $\mathtt{nats}$ as coinductive then,
since $\mathtt{nat}$ is (implicitly) marked as inductive, $t' \not \in
S^{Ty'}_{P_2}$.
\end{example}

Although for any specific typing function $Ty$ we need not 
have
$C_P \subseteq S_P^{Ty}$,
 considering all
typing functions simultaneously recovers a connection between
SLD-computability at infinity and S-computability at infinity.

\begin{definition}\label{def:S-hat}
If $P \in \mathbf{LP}(\Sigma)$ is observationally productive, then
$\widehat{S_P} \,= \,\bigcup\, \{S^{Ty}_P \,| \,Ty \textrm{ is }$
$\textrm{a typing function for } \Sigma \}$.
\end{definition}

The rest of this section formalises the relationship between $C_P$,
$\widehat{S_P}$, and $M^\omega_P$.

\begin{proposition}\label{prop:CP-in-SPhat}
Let $P \in \cat{LP}(\Sigma)$ be observationally productive. \\
 The infinite term $t \in
\cat{Term}^\infty(\Sigma)$ is SLD-computable at infinity by $t' \in
\cat{Term}(\Sigma)$ with respect to $P$\\
if and only if 
 there exists a typing
function $Ty$ for $\Sigma$ such that $t$ is S-computable at infinity
by $t'$ with respect to $P$ and $Ty$.
\end{proposition}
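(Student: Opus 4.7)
The plan is to prove both implications separately, exploiting observational productivity of $P$ and the fact that a single SLD-resolution step $\leadsto_\theta$ with $t \sim_\theta A$ can always be decomposed into a substitution step $\hookrightarrow^1$ followed by a (terminating) rewriting normalisation $\rightarrow^\mu$, and conversely that every structural resolution step is an SLD step composed with zero or more matching rewrites. This tight correspondence is what allows a fair SLD-derivation and an observationally productive S-derivation to share the same sequence of external resolvents.

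For the forward direction ($\Rightarrow$), I would take a fair infinite SLD-derivation $G_0 = {?\gets t'}, G_1, G_2, \ldots$ with mgus $\theta_1, \theta_2, \ldots$ and $d(t,\theta_k\cdots\theta_1(t')) \to 0$, and build the S-derivation $T_0 = \kw{rew}(P, {?\gets t'}, \mathit{id}) \to T_1 \to T_2 \to \cdots$ where $T_k = \kw{rew}(P, {?\gets t'}, \theta_k\cdots\theta_1)$; each such tree is well-defined and finite by Theorem~\ref{prop:sub-props} and Proposition~\ref{prop:RT-prod}. I would then define $Ty$ by declaring a predicate symbol of $\Sigma$ to be \emph{coinductive} iff some term whose head is that symbol is selected in infinitely many steps of the SLD-derivation, and \emph{inductive} otherwise. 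Fairness is what makes this construction work: any predicate appearing in an unselected residual subgoal must eventually be chosen, and only finitely many predicates are selected only finitely often. To witness Condition~2 of Definition~\ref{df:infty}, I would exhibit the subtrees $T'_{i_j}$ by pruning $T_{i_j}$ to just those branches descending from the subgoals actually resolved at or after step $i_j$ in the SLD-derivation. Condition~i (inductively closed) follows because any inductive node selected only finitely often must by fairness have been closed off into an inductive success subtree by some point; Condition~ii follows because coinductive nodes are precisely those selected infinitely often, so each one has an $\theta_m$-expanded descendant at some later stage.

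For the reverse direction ($\Leftarrow$), given the S-derivation $T_0 \to T_1 \to \cdots$ with associated resolvents $\theta_1, \theta_2, \ldots$ witnessing $t'$ as a finite approximation of $t$ with respect to $P$ and $Ty$, I would flatten the sequence by reading off the successive subgoal/clause pairs used in the external resolvent at each transition, interleaved with the (finite, by observational productivity) sequence of matching steps internal to each tree. The resulting sequence is a legitimate infinite SLD-derivation for $t'$ with the same mgus $\theta_1, \theta_2, \ldots$, and hence with $d(t,\theta_k\cdots\theta_1(t')) \to 0$. Fairness of this SLD-derivation is then obtained by combining Condition~i, which guarantees that inductive subgoals are exhausted within finitely many steps, with Condition~ii, which guarantees that every coinductive subgoal that has been produced acquires a coinductively open descendant later on and so is eventually selected for expansion.

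The main obstacle lies in the forward direction: verifying Condition~ii for the constructed subtrees $T'_{i_j}$. Fairness of the SLD-derivation tells us that individual selected subgoals will eventually be resolved, but Condition~ii requires the stronger structural statement that each coinductive variable or-node $T'_{i_j}(w)$ has a coinductively open descendant $T'_{i_m}(wv)$ for some $m>j$ — that is, that coinductive progress propagates strictly down inside the subtrees as one moves along the S-derivation. The careful part of the proof will be tracking the correspondence between positions $w$ in rewriting trees and the ancestor/descendant relation among the subgoals of $G_{i_j}$, and using fairness together with the fact that each coinductive predicate is selected infinitely often to locate the required descendant. The reverse direction is comparatively routine once the flattening is set up, its only subtlety being that matching rewrites within a single tree transition do not themselves correspond to SLD steps and so must be carefully distinguished from the substitution steps that do.
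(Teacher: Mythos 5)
Your overall strategy coincides with the paper's: translate the fair SLD-derivation into an S-derivation built from the same resolvents (keeping only the unification steps as tree transitions), and conversely flatten an S-derivation back into an SLD-derivation whose fairness is extracted from Condition~2 of Definition~\ref{df:infty}; your reverse direction is essentially the paper's argument. But in the forward direction you have introduced a genuine gap, and it is one the paper deliberately avoids. The proposition only asks for \emph{some} typing function $Ty$ to exist, and the paper simply takes $Ty$ to mark \emph{every} symbol of $\Sigma$ as coinductive. With that choice the requirement that the exhibited subtrees be inductively closed is vacuous (there are no inductive nodes at all), the clause of Condition~2 concerning inductive subgoals holds trivially, and fairness plus infiniteness of the SLD-derivation immediately yield that coinductively open subtrees occur infinitely often and that every coinductive open node acquires an expanded descendant. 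Your refined $Ty$ (coinductive iff the predicate is selected infinitely often) forces you to discharge exactly the obligations the trivial typing dissolves, and your sketch of how to do so is not sound as stated: fairness guarantees that an inductive subgoal is eventually \emph{selected and resolved}, not that it is ``closed off into an inductive success subtree'' --- resolving it may produce further subgoals, and what Condition~2 actually needs is that the open nodes remaining in the chosen rewriting subtrees are all coinductive, which requires a careful choice of subtree following the SLD clause selections and an argument that inductive residuals are exhausted uniformly along infinitely many trees. You yourself flag this as ``the main obstacle'' and leave it unresolved, so as written the forward direction is incomplete; replacing your $Ty$ by the all-coinductive typing repairs it at no cost.

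One further small correction for the reverse direction: the internal matching steps of a rewriting tree \emph{do} correspond to SLD-resolution steps (ones whose mgu happens to be a matcher of a clause head against the subgoal), which is precisely why the flattening produces a legitimate SLD-derivation; they are extra SLD steps interleaved between the external resolvents rather than steps with no SLD counterpart, and since matchers only instantiate clause variables the computed answer still converges to $t$ as required.
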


\begin{proof}
We must show that, for any $t \in \mathbf{GTerm}^\infty(\Sigma)$, if
$t$ is SLD-computable at infinity by $t'$ with respect to $P$, then
there is a typing function $Ty$ for $\Sigma$ such that $t$ is
S-computable at infinity by $t'$ with respect to $P$ and $Ty$. Since
$t$ is SLD-computable at infinity, there exist a $t' \in
\mathbf{Term}(\Sigma)$ and an infinite fair SLD-resolution reduction
$D$ of the form $G_0 = t' \rightarrow G_1 \rightarrow G_2 \ldots
\rightarrow G_k \rightarrow \ldots$ with mgus $\theta_1, \theta_2,
\ldots \theta_k \ldots$ such that $d(t, \theta_k \ldots \theta_1(t'))
\rightarrow 0$ as $k \rightarrow \infty$.

To show that $t$ is in $\widehat{S_P}$, consider $t'$, let $Ty$ be the
typing function marking all symbols in $\Sigma$ as coinductive.  We
construct an infinite S-derivation $D^*$ by first observing that each
SLD-resolution reduction step in $D$ proceeds either by matching or by
unification.  If $G_{i_1}$, $G_{i_2}$,.... is the sequence of lists in
$D$ out of which SLD-resolution reductions steps proceed by
unification, then let $D^*$ be the infinite S-derivation $T_0 =
\kw{rew}(P,?  \gets t', \mathit{id}) \rightarrow T_1 \rightarrow
\ldots T_j \rightarrow \ldots$, where $T_j = \cat{rew}(P, ? \gets t',
\theta_{i_j}...\theta_{i_1})$. We claim that $t$ is S-computable at
infinity with respect to $P$ and $Ty$ via the infinite S-derivation
$D^*$. The first condition of Definition~\ref{df:infty} is satisfied
because $d(t, \theta_k \ldots \theta_1(t')) \rightarrow 0$ as $k
\rightarrow \infty$ by the properties of $D$, and thus $d(t,
\theta_{i_j} \ldots \theta_{i_1}(t')) \rightarrow 0$ as $j \rightarrow
\infty$ by construction of $D^*$. To see that the second condition of
Definition~\ref{df:infty} is satisfied, recall that $D$ is fair and
infinite. Since $D$ is infinite and $Ty$ does not permit inductive
typing, $D^*$ contains (inductively closed and) coinductively open
rewriting trees infinitely often. As a result, $D^*$ satisfies i.
Since $Ty$ does not permit inductive typing, $D^*$ satisfies ii
trivially.  And $D^*$ satisfies iii because $D$ is both fair and
infinite.

In the opposite direction, suppose $D^* = T_0, T_1, \ldots$ is an infinite S-derivation that computes $t$ at infinity.
We need to show that there exists a corresponding $SLD$-derivation that is fair and non-failing.
It is easy to construct $D$ by following exactly the same resolvents as in $D^*$. We only need to show that such $D$ is fair and non-failing.
By definition, $T_0, T_1, \ldots$ should contain coinductive subtrees $T'_0, T'_1, \ldots$ in which every open 
coinductive node is resolved against
infinitely often. This means that corresponding derivation $D$ will be fair with respect to coinductively typed subgoals.
If the subtrees $T'_0, T'_1, \ldots$ do not involve inductive subgoals, then we have that $D$ is fair. (Because $D^*$ is non-terminating and non-failing, $D$ using the same resolvents will be nonfailing, too. )
Suppose $T'_0, T'_1, \ldots$ contained  inductive subgoals. 
By definition of $D^*$, every S-derivation step that resolved against a coinductive node is followed by a number of S-derivation steps that 
successfully close  all of the inductive subgoals in the corresponding rewriting subtrees. But that means that, in the corresponding
derivation $D$, these inductive subgoals will be chosen infinitely often and will not fail. This completes the proof. 
\end{proof}

We have the following immediate corollaries:

\begin{corollary}\label{cor:CP-in-SPhat}
If $P \in \mathbf{LP}(\Sigma)$ is observationally productive, then
$C_P = \widehat{S_P}$.
\end{corollary}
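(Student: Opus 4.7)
The plan is to derive this corollary by unfolding the definitions of $C_P$, $\widehat{S_P}$, and $S^{Ty}_P$, and then appealing directly to Proposition~\ref{prop:CP-in-SPhat}. The corollary is essentially a repackaging of the biconditional already established there; the only work is to manage the existential quantifiers correctly.

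For the inclusion $C_P \subseteq \widehat{S_P}$, I would start with an arbitrary $t \in C_P$. By definition of $C_P$, there exists $t' \in \cat{Term}(\Sigma)$ such that $t$ is SLD-computable at infinity by $t'$ with respect to $P$. The forward direction of Proposition~\ref{prop:CP-in-SPhat} (which requires observational productivity of $P$, an assumption of the corollary) then supplies a typing function $Ty$ for $\Sigma$ such that $t$ is S-computable at infinity by $t'$ with respect to $P$ and $Ty$. Hence $t \in S^{Ty}_P$, and by Definition~\ref{def:S-hat} of $\widehat{S_P}$ as a union over all typing functions, $t \in \widehat{S_P}$.

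For the opposite inclusion $\widehat{S_P} \subseteq C_P$, I would take $t \in \widehat{S_P}$. By Definition~\ref{def:S-hat}, there exists a typing function $Ty$ for $\Sigma$ with $t \in S^{Ty}_P$, so by Definition~\ref{df:infty} there is some $t' \in \cat{Term}(\Sigma)$ that finitely approximates $t$ with respect to $P$ and $Ty$. The converse direction of Proposition~\ref{prop:CP-in-SPhat} then yields that $t$ is SLD-computable at infinity by $t'$ with respect to $P$, giving $t \in C_P$.

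Since this is essentially a bookkeeping exercise on top of Proposition~\ref{prop:CP-in-SPhat}, I do not anticipate any real obstacle; the only point requiring care is that the proposition is stated with $t'$ fixed on both sides of the biconditional, so the existentials over $t'$ in the definitions of $C_P$ and in the witness provided for $t \in S^{Ty}_P$ line up naturally. No additional induction or case analysis is required beyond what already appears in the proof of Proposition~\ref{prop:CP-in-SPhat}.
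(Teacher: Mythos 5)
Your proof is correct and matches the paper, which presents this corollary as an immediate consequence of Proposition~\ref{prop:CP-in-SPhat}: both inclusions follow by unfolding the definitions of $C_P$, $S^{Ty}_P$, and $\widehat{S_P}$ and applying the two directions of the proposition, exactly as you do. The bookkeeping of the existential quantifiers over $t'$ and over the typing function $Ty$ is handled correctly, so no gap remains.
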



\begin{corollary}\label{th:si}
(Coinductive soundness of S-computations at infinity) If $P \in
  \mathbf{LP}(\Sigma)$ is observationally productive, then
  $\widehat{S_P} \subseteq M^{\omega}_P$.
\end{corollary}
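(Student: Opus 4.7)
The plan is to derive the statement as an almost immediate consequence of the preceding Corollary~\ref{cor:CP-in-SPhat} together with the classical soundness result for SLD-computations at infinity recalled just before Section~4.1 (namely $C_P \subseteq M_P^\omega$, due to \cite{EmdenA85,Llo87}). Concretely, given observational productivity of $P$, Corollary~\ref{cor:CP-in-SPhat} equates $\widehat{S_P}$ with $C_P$, and the classical inclusion then yields $\widehat{S_P} = C_P \subseteq M_P^\omega$. So the corollary really is a one-line chain of inclusions, and no fresh induction on rewriting trees or on S-derivations is needed.

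First I would state the two ingredients explicitly: (i) for observationally productive $P$, every $t \in \widehat{S_P}$ is SLD-computable at infinity with respect to $P$ (this is the nontrivial direction of Proposition~\ref{prop:CP-in-SPhat}, packaged via Corollary~\ref{cor:CP-in-SPhat}); and (ii) every term SLD-computable at infinity lies in the greatest complete Herbrand model, i.e.\ $C_P \subseteq M_P^\omega$. Then I would pick an arbitrary $t \in \widehat{S_P}$, unfold the definition of $\widehat{S_P}$ to get a typing function $Ty$ with $t \in S_P^{Ty}$, invoke Proposition~\ref{prop:CP-in-SPhat} to obtain $t \in C_P$, and conclude $t \in M_P^\omega$ by the classical result.

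There is no real obstacle here, because the hard work was done in Proposition~\ref{prop:CP-in-SPhat}: the forward direction of that proposition is what lets us transport the ``finite approximation + coinductive openness + inductive closure'' conditions from Definition~\ref{df:infty} into a fair infinite SLD-resolution reduction, and observational productivity is the hypothesis ensuring that all $\rightarrow^\mu$-normal forms invoked in an S-derivation actually exist. The only point I would take care to double-check is that the hypothesis of observational productivity is genuinely needed at each step of the chain: it is required to make sense of $\widehat{S_P}$ at all (the normal forms in Definition~\ref{def:prod} must exist) and to apply Corollary~\ref{cor:CP-in-SPhat}, but the inclusion $C_P \subseteq M_P^\omega$ is unconditional.

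Finally, I would note, for the reader's benefit, that the corollary completes the symmetry picture outlined in the introduction: soundness of finite (inductive) S-refutations relative to $M_P$ (Theorem~\ref{thm:sc}) is now mirrored by soundness of infinite observationally productive S-computations relative to $M_P^\omega$, with observational productivity playing for coinduction the role played by termination for induction.
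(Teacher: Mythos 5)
Your proposal matches the paper's own proof: the corollary is obtained exactly by combining Corollary~\ref{cor:CP-in-SPhat} (that $\widehat{S_P} = C_P$ for observationally productive $P$) with the classical inclusion $C_P \subseteq M^{\omega}_P$ from \cite{EmdenA85,Llo87}. Your unfolding of the one-line chain, and the observation that the real work lives in Proposition~\ref{prop:CP-in-SPhat}, is accurate and complete.
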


\begin{proof}
Using Corollary~\ref{cor:CP-in-SPhat} and the fact that  $C_P \subseteq M^\omega_P$.
\end{proof}

Corollary~\ref{th:si} shows that, for observationally productive
programs, S-computations at infinity are sound with respect to
greatest complete Herbrand models. 
The corresponding completeness
result --- namely, that $M^{\omega}_P \subseteq \widehat{S_P}$ ---
does not hold, even if $P$ is observationally productive. The problem
arises when $P$ does not admit any infinite S-resolution
reductions. For example, if $P_{9}$ is the program with the single
clause $\mathtt{anySuccessor(s(X))} \gets$\;, then
$\mathtt{anySuccessor(s(s(\ldots )))} \in M^{\omega}_{P_{9}}$. But $P_{9}$
admits no infinite S-derivations, so no (infinite) terms are
S-computable at infinity with respect to $P_{9}$ and $Ty$ for any typing
function $Ty$.  A similar problem arises when $P$ fails the occurs
check. For example, if $P_{10}$ comprises the single clause
$\mathtt{p(X,f(X))} \; \gets \; \mathtt{p(X,X)}$, with $\mathtt{p}$
marked as coinductive, then $\mathtt{p(f(f(\ldots)),f(f(\ldots)))}$ is
in $M^{\omega}_{P_{10}}$ but is not S-computable at infinity with respect
to $P_{10}$ and $Ty$. This case is subtly different from the first one,
since $P_{10}$ defines the pair of  infinite terms $\mathtt{X=f(X)}$ only if
unification without the occurs check is permitted.

Corollary~\ref{th:si} ensures that (finite) coinductive terms logically
entail the infinite terms they finitely approximate. But there may, in
general, be programs for which coinductive terms also logically entail
other finite terms.

\begin{example}\label{ex:py}
Consider the program $P_{12}$ comprising the clause of $P_4$ and the
clause

\vspace*{0.1in}\noindent
1. $\mathtt{p(Y) \gets from(0,X)}$

\vspace*{0.1in}\noindent and suppose $Ty$ types only $\mathtt{from}$
as coinductive. Although $\mathtt{from(0,X)}$ finitely approximates an
infinite term with respect to $P_{11}$ and $Ty$, no infinite instance of
$\mathtt{p(Y)}$ is S-computable at infinity with respect to $P_{11}$ and
$Ty$. Nevertheless, $\mathtt{p(0)}$ and other instances of
$\mathtt{p(Y)}$ are logically entailed by $P_{11}$ and thus in
$M^\omega_{P_{11}}$.
\end{example}

\noindent
The following definition takes such situations into account:

\begin{definition}\label{def:ii}
Let $P \in \mathbf{LP}(\Sigma)$ be observationally productive, let
$Ty$ be a typing function for $\Sigma$, and let $t \in
\mathbf{Term}(\Sigma)$. Then $t$ is \emph{implied at infinity with
  respect to $P$ and $Ty$} if there exist terms $t_1, \ldots , t_n \in
\mathbf{GTerm}^\infty(\Sigma)$, each of which is S-computable at
infinity with respect to $P$ and $Ty$, and there exists a sequence of
rewriting reductions $t \rightarrow \ldots \rightarrow [t_1', \ldots,
  t_n']$ such that, for each $t_i$, $\theta(t_i') = t_i$ for some
$\theta \in \cat{Subst}^\omega(\Sigma)$.  We define $SI^{Ty}_P = \{ t
\in \mathbf{GTerm}^\omega(\Sigma) \, | \, $ $t$ is S-computable at
infinity or S-implied at infinity with respect to $P$ and $Ty\}$.
\end{definition}

\begin{example}
Consider once again the term $\mathtt{p(Y)}$ from Example~\ref{ex:py}
Although $\mathtt{p(Y)}$ is not {\em computable} at infinity with
respect to $P_{12}$ and $Ty$ as in Example~\ref{ex:py}, it is indeed {\em
  implied} at infinity with respect to $P_{12}$ and $Ty$.
\end{example}

Defining $\widehat{SI_P} \,= \,\bigcup\, \{SI^{Ty}_P \,| \,Ty \textrm{
  is }$ $\textrm{a typing function for } \Sigma \}$ gives the
following corollary of Corollary~\ref{th:si}:

\begin{corollary}\label{cor:si}
If $P \in \mathbf{LP}(\Sigma)$ is observationally productive, then
$\widehat{SI_P} \subseteq M^{\omega}_P$.
\end{corollary}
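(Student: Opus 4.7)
The statement is essentially that everything in $\widehat{SI_P}$ is in $M^\omega_P$, and by unfolding the definition of $\widehat{SI_P}$ we see that the interesting work is to extend Corollary~\ref{th:si} (which already covers the S-computable case) from $\widehat{S_P}$ to the terms that are merely S-\emph{implied} at infinity. So my plan is to fix $t \in \widehat{SI_P}$, hence $t \in SI^{Ty}_P$ for some typing function $Ty$. In the S-computable case there is nothing to do: $t \in \widehat{S_P} \subseteq M^\omega_P$ by Corollary~\ref{th:si}. The remainder of the proof reduces the S-implied case to this one.

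For the S-implied case, Definition~\ref{def:ii} supplies ground infinite terms $t_1,\dots,t_n \in \mathbf{GTerm}^\infty(\Sigma)$, each S-computable at infinity with respect to $P$ and $Ty$, a rewriting reduction $t \to \cdots \to [t_1', \dots, t_n']$, and a substitution $\theta \in \cat{Subst}^\omega(\Sigma)$ with $\theta(t_i') = t_i$ for each $i$. By Corollary~\ref{th:si}, each $t_i$ already lies in $M^\omega_P$. The plan is then an induction on the length $k$ of the rewriting reduction. When $k = 0$ we have $n = 1$ and $t = t_1'$ with $\theta(t) = t_1 \in M^\omega_P$; but $t$ is ground (since $t \in SI^{Ty}_P \subseteq \mathbf{GTerm}^\omega(\Sigma)$), so in fact $t = t_1 \in M^\omega_P$. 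In the step case, split the reduction into its first step $t \to [\sigma(B_0), \dots, \sigma(B_m)]$, where $A \gets B_0, \dots, B_m$ is some clause of $P$ and $A \prec_\sigma t$, followed by a shorter reduction to $[t_1',\dots,t_n']$; applying the induction hypothesis to each of the component reductions of $\sigma(B_0),\dots,\sigma(B_m)$ (after suitable grounding) gives that each appropriate ground instance of $\sigma(B_j)$ lies in $M^\omega_P$, and then one backward application of the big-step rule for $A \gets B_0,\dots,B_m$ places $t$ itself into $M^\omega_P$ by Definition~\ref{def:cmodel}.

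The main technical obstacle is bookkeeping with the substitutions. The matcher $\sigma$ produced by a rewriting step only grounds the variables of $A$ that are already pinned down by $t$; it need not ground the extra variables appearing in the $B_j$'s. To apply the big-step rule backward we must supply a single grounding substitution $\sigma^\star$ for the whole clause with $\sigma^\star(A) = t$ and $\sigma^\star(B_j) \in M^\omega_P$. Building $\sigma^\star$ requires combining $\sigma$ with the (possibly infinite) substitutions coming from the inductive hypothesis applied to each $\sigma(B_j)$, and with $\theta$ to account for the leaves $t_i' \mapsto t_i$; a standard variables-renamed-apart argument together with the compatibility of matching and composition ensures these can be patched into a single $\sigma^\star$ without conflict. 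Once this is in place, the closure-backward characterisation of $M^\omega_P$ finishes the induction, and the corollary follows by taking the union over all typing functions $Ty$.
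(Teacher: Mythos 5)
Your proof is correct and follows essentially the same route as the paper, which states the result as an immediate corollary of Corollary~\ref{th:si}: the S-computable terms are handled by that corollary, and membership in $M^{\omega}_P$ is propagated back along the finitely many rewriting steps of Definition~\ref{def:ii} using the closure of the greatest complete Herbrand model under the big-step rule. The induction on the length of the rewriting reduction and the substitution bookkeeping you describe are precisely the details the paper leaves implicit.
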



\section{Conclusions, Related Work, and Future Work}\label{sec:concl}

This paper gives a first complete formal account of the declarative
and operational semantics of structural (i.e., S-) resolution. We
started with characterisation of S-resolution in terms of big-step and
small-step operational semantics, and then showed that a rewriting
tree representation of this operational semantics is inductively sound
and complete, as well as coinductively sound.  Since observational
productivity is one of the most striking features of S-resolution,
much of this paper's discussion is centered around the subject of
productivity in its many guises: SLD-computations at infinity,
S-computations at infinity,  productive S-derivations with loop 
detection for rational terms (``observations of a coinductive proof") 
and sound observations of infinite derivations for irrational terms 
("sound observations").  We
have shown how an approach to productivity based on
S-resolution makes it possible to formalise the distinction between
global and observational productivity. This puts LP (and the broader
family of resolution-based methods) on par with coinductive methods in
ITP and TRS. We have also shown that our new notion of observational
productivity supports the formulation of a new coinductive proof
principle based on loop detection; moreover that proof principle is 
sound relative to S-computations at infinity and SLD-computations at 
infinity  known from the
1980s~\cite{Llo87,EmdenA85}. 

 The webpage \url{https://github.com/coalp} contains
implementation of sound observations of S-derivations and the Coq code supporting the proof-theoretic analysis 
of the loop detection method and its possible extensions.

The research reported herein continues the tradition of study of
infinite-term models of Horn clause
logic~\cite{JaffarS86,Llo87,EmdenA85,Jaume00}. In particular, we have
given a full characterisation of S-resolution relative to the least
and greatest fixed point semantics of LP, as is standard in the
classical LP literature.  Moreover, we have connected the classical
work on least and greatest complete Herbrand models of LP to the more
modern coalgebraic notation~\cite{Sg12} in Section~\ref{sec:cotrees}.
Our definitions of term trees and rewriting trees relate to the line
of research into infinite (term-)
trees~\cite{Courcelle83,JaffarS86,JohannKK15}.




\vspace*{0.05in}
S-resolution arose from coalgebraic studies of
LP~\cite{KPS12-2,KP11-2,KomendantskayaP11}, and these were
subsequently developed into a bialgebraic
semantics~\cite{BonchiZ13,BonchiZ15}. However, the bialgebraic
development takes the coalgebraic semantics of LP in a direction
different from our productivity-based analysis of
S-resolution. Investigating possible connections between observational
productivity of logic programs and their bialgebraic semantics offers
an interesting avenue for future work.

Another related area of research is the study of coinduction in first
order calculi other than Horn clause logic~\cite{BaeldeN12}, including fixed-point
linear logics (e.g.MuLJ)~\cite{ba08} and coinductive sequent
calculi~\cite{BrotherstonS11}. 
 One important methodological
difference between MuLJ (implemented as Bedwyr)~\cite{ba08} and
S-resolution is that Bedwyr begins with a strong calculus for
(co)induction and explores its implementations, while S-resolution
begins with LP's computational structure and constructs such a
calculus directly from it. Notably, Bedwyr requires cycle/invariant
detection, accomplished via heuristics that are incomplete but
practically useful. 
S-resolution 
may in the future provide further automation 
for systems like Bedwyr. 

The definition of observationally productive logic programs given in
this paper closely resembles the definitions of productive and guarded
corecursive functions in ITP --- particularly in Coq~\cite{BK08} and Agda, as
illustrated in Introduction.  

Further analysis of the
relationship between that coinductive proof principle and the one
developed here would require the imposition of a type-theoretic
interpretation on S-resolution. A type-theoretic view of S-resolution
for inductive programs is given in~\cite{FK15,FK16}. A preliminary
investigation of how coinductive hypothesis formation for Horn clauses
can be interpreted type-theoretically is given in~\cite{FKS15}.

Productivity has also become a well-established topic of research within
TRS community; see, e.g.,~\cite{EndrullisGHIK10,EndrullisHHP015}. The
definition of productivity for TRS relates to observational
productivity defined in this paper, and reflects the intuition of
finite observability of fragments of computations. However, because
S-resolution productivity is defined via \emph{termination} of
rewriting reductions, it also strongly connects to the termination
literature for TRS~\cite{Terese}.  Our definition in
Section~\ref{sec:cotrees} of S-resolution in terms of reduction
systems makes the connection between S-resolution and TRS explicit (see also~\cite{FK16}),
and thus encourages cross-pollination between research in S-resolution
and TRS.

The fact that productivity of S-resolution depends crucially on
termination of rewriting reductions makes this work relevant to
co-patterns~\cite{AbelPTS13}. In particular,~\cite{Basold2015}
considers a notion of productivity for co-patterns based on strong
normalisation of term-rewriting. This is similar to our notion of
observational productivity for logic programs. Further investigation
of applications of S-resolution in the context of co-patterns is under
way.

Observationally productive 
S-derivations may be seen as an example of \emph{clocked} corecursion~\cite{AtkeyM13}, where finite rewriting trees
give the measures of observation in a corecursive computation. Formal investigation of this relation is a future work.

Overall, we see the work presented here as laying a new foundation for
automated coinductive inference well beyond LP. In particular, we
expect our new methods to allow us to extend type inference algorithms
for a variety of programming
languages~\cite{AnconaLagorio11,Lammel:2005,AbelPTS13} to accommodate
richer forms of coinduction. We are currently exploring this enticing
new research direction.

\section{Acknowledgments}

We thank the following colleagues for discussions that encouraged and inspired this work:
Andreas Abel, Davide Ancona, Henning Basold,  Peng Fu, Gopal Gupta, Helle Hansen, Martin Hofmann and   Tom Schrijvers.
We particularly thank Vladimir Komendantskiy and Franti\^{s}ek Farka, who at different times implemented 
prototypes of CoAlgebraic Logic Programming (CoALP) and S-Resolution: their input has been invaluable for shaping this work.

\bibliographystyle{ACM-Reference-Format-Journals}
\bibliography{katya2}
\pagebreak
\appendix
\section{List of examples of Logic programs used across all sections}\label{sec:exp}

\small{
	\begin{tabular}{|p{1cm} p{6cm} p{6cm} |}
		\hline 
		Program short reference & Program clauses & Program meaning suggested by Herbrand models\\
		\hline \hline
		$P_1$ &

		$$ 0. \mathtt{nat(0)} \; \gets \; $$ 
				$$ 1. \mathtt{nat(s(X))} \; \gets \; \mathtt{nat(X)}$$

  & The set of all natural numbers\\
\hline
$P_2$ &  
$P_1$ and 
		$$ 2. \mathtt{nats(scons(X,Y))} \; \gets \; \mathtt{nat(X)},\mathtt{nats(Y)}$$
 & The set of natural numbers union the set of streams of natural numbers \\
\hline

$P_3$ & $$0. \mathtt{add(0,Y,Y)} \; \gets \;$$ 
$$ 1. \mathtt{add(s(X),Y,s(Z)) } \; \gets \; \mathtt{add(X,Y,Z)}$$
$$ 2. \mathtt{fibs(X,Y,cons(X,S)) } \; \gets \; \mathtt{add(X,Y,Z),
  fibs(Y,Z,S)}$$
	& The set of terms satisfying the relation of addition and terms denoting infinite streams of Fibonacci numbers \\ \hline
	
	$P_4$ & $$0. \mathtt{from(X, scons(X,Y))} \gets \mathtt{from(s(X),Y)}$$ & The set containing one term representing the infinite stream $0::s(0):: s(s(0))::\ldots$\\ \hline
		
	$P_5$ & $$0. \mathtt{from(X, scons(X,Y))} \gets \mathtt{from(s(X),Y)}, \mathtt{error(0)}$$ & The empty set\\ \hline
	$P_6$ &  $$0. \mathtt{conn(X,Y)} \gets \mathtt{conn(X,Z)}, \mathtt{conn(Z,Y)}$$
 $$1. \mathtt{conn(a,b)} \gets$$
 $$ 2. \mathtt{conn(b,c)} \gets$$ & The set $\{\mathtt{conn(a,b)}$ ,  $\mathtt{conn(b,c)}$,  $\mathtt{conn(a,a)}$,  $\mathtt{conn(a,b)}$,  $\mathtt{conn(b,c)}$,  $\mathtt{conn(b,b)}$,  $\mathtt{conn(c,c)}\}$\\ \hline
$P_7$ & $$0. \mathtt{p(c)} \gets$$
$$1. \mathtt{p(X)} \gets \mathtt{q(X)}$$ &  The set $\{\mathtt{p(c),q(c)}\}$. \\ \hline
$P_8$ & $P_2$, $P_3$ and 
$$ \mathtt{fibnats(X,Y)} \; \gets \;
\mathtt{fibs(0,s(0),X)}, \mathtt{nats(Y)} $$ & The union of sets for $P_2$ and $P_3$ plus all terms where predicate $\mathtt{fibnats}$ has  all terms given in models for  $P_2$ in the first argument and all terms given in models for $P_3$ in the second argument \\ \hline
		\end{tabular}}
		
		\small{
	\begin{tabular}{|p{1cm} p{6cm} p{6cm} |}
	\hline
$P_{9} $ & $$\mathtt{anySuccessor(s(X))} \gets$$ & The set $\{\mathtt{anySuccessor(s(0))}$, $\mathtt{anySuccessor(s(s((0))}$, $\ldots\}$ \\ \hline
$P_{10}$ & $$\mathtt{p(X,f(X))} \; \gets \; \mathtt{p(X,X)} $$ & The set $\{\mathtt{p(f(f(\ldots)),f(f(\ldots)))}\}$.\\ \hline
$P_{11}$ & $P_4$ and  $$\mathtt{p(Y) \gets from(0,X)}$$ & The set as for $P_4$ union the set of all terms $p(t)$, where $t$ is a term from the Herbrand base of that program.\\ \hline
$P_{12}$ & $$ \mathtt{zeros(scons(0,X))} \; \gets \; \mathtt{zeros(X)}$$ & The set contains one term -- denoting the stream of zeros. \\ \hline
		\end{tabular}}

\end{document}